\documentclass[conference]{IEEEtran}

\usepackage[utf8]{inputenc}
\usepackage[T1]{fontenc}
\usepackage[english]{babel}
\usepackage{amsmath,amssymb,amsfonts}
\usepackage{amsthm}
\usepackage{bm}
\usepackage{mathtools}
\usepackage{graphicx}
\usepackage{xcolor}
\usepackage{tikz}
\usepackage{cite}
\usepackage{pgfplots}
\usetikzlibrary{arrows.meta,calc,positioning}
\usepackage{xcolor}
\definecolor{revisionblue}{RGB}{0,70,140}

\newtheorem{theorem}{Theorem}
\newtheorem{proposition}{Proposition}
\newtheorem{lemma}{Lemma}
\newtheorem{corollary}{Corollary}
\newtheorem{definition}{Definition}
\newtheorem{assumption}{Assumption}
\newtheorem{remark}{Remark}
\newtheorem{problem}{Problem}

\newcommand{\R}{\mathbb{R}}
\newcommand{\V}{\mathcal{V}}
\newcommand{\E}{\mathcal{E}}
\newcommand{\G}{\mathcal{G}}
\newcommand{\N}{\mathcal{N}}
\newcommand{\U}{\mathcal{U}}
\newcommand{\Y}{\mathcal{Y}}
\newcommand{\dist}{\operatorname{dist}}
\newcommand{\proj}{\operatorname{proj}}
\newcommand{\col}{\operatorname{col}}
\newcommand{\argmin}{\operatorname*{arg\,min}}
\newcommand{\argmax}{\operatorname*{arg\,max}}
\newcommand{\supp}{\sigma}

\begin{document}

\title{Distributed Containment of a Compromised Agent through Repulsive Cages}

\author{
Luigi Petruzziello$^1$, Camilla Fioravanti$^{1*}$, and Gabriele Oliva$^{1}$%
\thanks{$^1$Department of Engineering, University Campus Bio-Medico of Rome,
Via Alvaro del Portillo, 21 - 00128 Roma, Italy.}%
\thanks{$^*$Corresponding author. Email: c.fioravanti@unicampus.it}%
}

\maketitle

\begin{abstract}
UAV swarms and cyber-physical multi-agent systems are increasingly deployed in safety-critical missions that require coordinated motion, distributed decision making, and autonomy. A major security risk arises when a legitimate agent is hijacked and driven by adversarial high-level commands. Rather than focusing on detection and isolation of malicious agents, we exploit a structural property common in autonomous platforms: low-level collision-avoidance modules are typically implemented as independent safety layers and may remain active even under high-level compromise.
Building on this property, we propose a distributed containment framework that uses the compromised agent’s uncompromised avoidance response as an indirect actuation channel. Defender agents select their geometric configuration to shape the repulsive field experienced by the target, with the goal of keeping it inside a prescribed admissible region and, when required, steering it toward a desired destination. The interaction is modeled as an online Stackelberg game in which defenders act as leaders and the adversary reacts by choosing the target command.
Using support-function and normal-cone arguments, we derive an exact geometric characterization of robust one-step containment and introduce the notion of a repulsive cage. These results define a centralized Stackelberg oracle and motivate a fully distributed online approximation based on local communication and dynamic field estimation. We prove sublinear dynamic-regret bounds with respect to the centralized benchmark, quantifying the effect of network-induced estimation errors and temporal variability of the stage-wise optimum. Simulations validate the approach and corroborate the theory.

\end{abstract}


{\color{black}
\section{Introduction}
\label{sec:introduction}
Cooperative multi-agent systems are now a consolidated platform in several application domains, ranging from environmental surveillance and autonomous vehicle formations to infrastructure monitoring and search-and-rescue operations~\cite{OlfatiSaberFaxMurray2007,CortesMartinezKaratasBullo2004}. Many of these platforms operate according to a two-layer control architecture: a high-level command channel, which generates mission setpoints from the coordination logic, and a low-level safety layer, typically implemented in hardware or firmware, which automatically handles critical functions such as collision avoidance through pairwise repulsive actions based on relative distances~\cite{Khatib1986,RimonKoditschek1992}. The latter is designed to be robust and independent of the mission channel, precisely because it acts as a non-negotiable safety primitive.

This architectural separation has an important security implication: a compromise of the high-level command channel, although severe, does not automatically imply control over the entire control stack of the affected agent. This is precisely the scenario considered in this work. A legitimate agent of the fleet is hijacked by an external attacker, who gains control of its mission channel and freely selects its motion setpoint, but cannot disable or rewrite the low-level layer. A motivating example is a formation of unmanned aerial vehicles in which one vehicle is hijacked: the vehicle, which we refer to as the \emph{target}, continues to automatically generate its repulsive response with respect to the other agents, which we refer to as \emph{defenders}. The compromise is therefore hybrid: the high-level command is under adversarial control, while the low-level response remains unaltered, and this asymmetry is the structure exploited by our approach.

The objective is to keep the target inside the admissible region $\Omega_{k+1}$ at each time step, despite the adversarial choice of $u_a$. The defenders cannot command the target and do not know $u_a$ a priori; their only leverage is their collective configuration. By choosing where to move, they shape the aggregate repulsive field $g_k(y)$ that the target's low-level layer applies to the target itself. The resulting containment problem is therefore \emph{indirect} and has a \emph{field-shaping} nature: the defenders do not control the target directly, but shape the repulsive environment in which it evolves. Compared with the classical use of artificial potential fields~\cite{Khatib1986,RimonKoditschek1992}, where repulsion is a local collision-avoidance tool for cooperative agents, here the same primitive is repurposed as a control channel acting on an agent whose mission channel is adversarial.

The informational asymmetry between the defenders and the adversary naturally lends itself to a hierarchical formulation in the spirit of Stackelberg games~\cite{vonStackelberg2011}: the defenders first commit to a configuration $y \in \mathcal{Y}_k$, observable through the physical geometry of the network, and the adversary then reacts by selecting $u_a$. At each time step, the configuration-command pair defines a stage game, and the overall trajectory is a sequence of such games coupled through the dynamics. Unlike the literature on \emph{security games}~\cite{Tambe2011,Paruchuri2008}, where the main object is the computation of a stationary Stackelberg equilibrium, here we are interested in online tracking of a sequence of stage equilibria in a dynamical system with time-varying constraints and distributed information. From this perspective, our work is also related to no-regret learning for Stackelberg games~\cite{BalcanBlumHaghtalabProcaccia2015}, while differing from it in two key aspects: the leader's action space is continuous and geometric rather than discrete, and the performance is measured through dynamic regret against a sequence of stage oracles rather than against a fixed strategy in hindsight. Methodologically, our analysis builds on the framework of online convex optimization~\cite{Zinkevich2003,ShalevShwartz2012} and networked distributed optimization~\cite{NedicOzdaglar2009}, with the closest reference point being the recent literature on distributed online convex optimization with time-varying constraints~\cite{YiLiXieJohansson2020}.

The closest line of work to our problem is \emph{indirect herding}~\cite{ChipadePanagou2019,ChipadePanagou2020,ChipadeMarellaPanagou2021,PiersonSchwager2018,LicitraBellDixon2019,AulettaFioreRichardsonDiBernardo2022,SebastianMontijanoSagues2022,ZhangLeiDuanPengPan2024,LamaDiBernardo2024,LamaDiBernardoKlapp2025}, where defenders do not directly command the target but exploit an automatic reactive response of the latter. A particularly related paradigm is StringNet Herding~\cite{ChipadePanagou2019,ChipadePanagou2020,ChipadeMarellaPanagou2021}, which achieves containment through a closed formation of virtual barriers, under the assumption that the attackers avoid collisions with defenders and barriers. Three main distinctions separate our work from this literature. First, the nature of the target is different: StringNet and related indirect herding works consider \emph{external} adversarial swarms with reactive or risk-averse dynamics, whereas we consider a single \emph{internal} agent of the fleet, hijacked at the high-level command layer, whose command may be worst-case. Second, the actuation mechanism is different: our approach does not require a closed net or an impenetrable barrier, but uses the compromised target's low-level repulsive primitive as the main indirect actuation channel. Third, from an analytical standpoint, the cited literature typically provides Lyapunov stability guarantees, finite-time convergence results, or scaling-law characterizations~\cite{LamaDiBernardo2024,LamaDiBernardoKlapp2025}, whereas our contribution is a distributed dynamic-regret guarantee against a time-varying Stackelberg oracle.
}

\subsection{Contribution}
In this paper, we introduce a novel containment framework for compromised agents in multi-agent systems, with a focus on UAV and mobile-robot swarms operating in safety-critical scenarios. Unlike traditional resilient-coordination approaches centered on attack detection, isolation, or exclusion, our method exploits a structural property of many autonomous platforms: low-level collision-avoidance mechanisms typically remain active even when the high-level command layer is compromised. This property allows defender agents to indirectly influence the compromised target through repulsive-field shaping.
The main contributions are as follows:
\begin{itemize}
\item We provide an exact geometric characterization of robust one-step containment. Using support-function and normal-cone arguments, we derive equivalent containment conditions and introduce the notion of a \emph{repulsive cage}, i.e., a defender configuration that guarantees containment against all admissible adversarial commands while optionally inducing motion toward a desired destination.
\item We formulate the containment task as an online Stackelberg game between a defender team and an adversarial target. On this basis, we define a centralized Stackelberg oracle and establish existence and optimality properties of the stage problem.
\item We develop a fully distributed online approximation of the centralized oracle based only on local communication, dynamic field estimation, and neighborhood interactions. We complement the algorithm with a dynamic-regret analysis that quantifies the performance gap with respect to the centralized Stackelberg benchmark.
\end{itemize}

\section{Preliminaries}
\label{sec:preliminaries}
\subsection{Notation and Graph Theory}
We denote vectors with boldface lowercase letters and matrices with uppercase letters. We refer to the \mbox{$(i,j)$-th} entry of a matrix $A$ by $A_{ij}$. 
We represent by ${\bm 0}_n$ and ${\bm 1}_n$ vectors with $n$ {entries}, all equal to zero and to one, respectively. We use $\|\cdot\|$ for the Euclidean norm and
$\col(\cdot)$ for vertical concatenation. For a closed convex set
$C\subset\R^d$, the distance and projection of a \textcolor{black}{vector} $z\in\mathbb{R}^d$ are
$$
    \dist(z,C):=\inf_{q\in C}\|z-q\|,
    \qquad
    \proj_C(z):=\argmin_{q\in C}\|z-q\|.
$$
The normal cone of $C$ at $z\in C$ is
\begin{equation}
    N_C(z):=\{v\in\R^d:v^\top(q-z)\le 0,\ \forall q\in C\}.
    \label{eq:normal_cone}
\end{equation}
For a closed convex set $C$, its support function is
\begin{equation}
    \supp_C(v):=\sup_{q\in C} q^\top v,
    \label{eq:support_function}
\end{equation}
with the usual extended-real convention. If $C$ is compact, the supremum
is attained.
We also write $[a]_+:=\max\{a,0\}$.

Let $\mathcal{G} = (\mathcal{V}, \mathcal{E})$ be a {\em graph} with $n$ nodes $\mathcal{V}=\{ 1, \ldots, n \}$ and $e$ edges $\mathcal{E}\subseteq \mathcal{V}\times \mathcal{V}$, where $(i,j)\in \mathcal{E}$ captures the existence of a link from node $i$ to node $j\neq i$. 
A graph is said to be {\em undirected} if the existence of an edge $(i,j)\in \mathcal{E}$ implies the existence of $(j,i)\in \mathcal{E}$, while it is {\em directed} otherwise.
In this paper, we consider undirected graphs.
The neighborhood $\mathcal{N}_i$ of a node $i$ in an undirected graph is the set of nodes $j$ such that $(i,j)\in \mathcal{E}$.
The {\em degree} $d_i$ of a node $i$ in an undirected graph is the number of its edges, i.e., \mbox{$d_i = |\mathcal{N}_i|$}. The notation $\G_k=(\V,\E_k),$ defines the graph topology at time $k$. 
%
Associated with $\G_k$ is a mixing matrix
$W[k]=[w_{ij}[k]]\in\R^{n\times n}$.

\begin{assumption}
\label{ass:graph}
For every $k$, the graph $\G_k$ is connected. The matrix $W[k]$ satisfies:
\begin{align}
    &w_{ij}[k]>0
    \quad\text{only if }j=i\text{ or }j\in\N_i[k],
    \label{eq:weight_sparsity}\\
    &W[k]\mathbf{1}=\mathbf{1},
    \qquad
    \mathbf{1}^\top W[k]=\mathbf{1}^\top,
    \label{eq:doubly_stochastic}\\
    &w_{ij}[k]\ge w_{\min}>0
    \quad\text{whenever }w_{ij}[k]>0,
    \label{eq:weight_lower_bound}\\
    &\left\|W[k]-\frac{1}{n}\mathbf{1}\mathbf{1}^\top\right\|_2
    \le \theta
    \quad\text{for some }\theta\in(0,1).
    \label{eq:mixing_contraction}
\end{align}
\end{assumption}
Notice that $w_{ij}[k]$ is the weight used by node $i$ when averaging information
received from node $j$. The sparsity condition enforces
neighbor-to-neighbor communication, double stochasticity preserves network
averages, and the contraction bound is the mixing estimate used in the
regret analysis.

\subsection{Stackelberg Games}
\label{sec:prelim-stackelberg}

A Stackelberg game is a two-level hierarchical game in which one player, called the \emph{leader}, selects and commits to its action first, while the other player, called the \emph{follower}, observes such an action and responds accordingly. Unlike simultaneous games, in which players choose their actions at the same time without observing each other's moves and the relevant solution concept is typically a Nash equilibrium, the Stackelberg model features an explicit informational asymmetry: the leader \emph{commits first}, that is, it commits to its action before the follower, making it observable. The leader exploits this precedence, but must anticipate the follower's optimal response in its own optimization problem.

\begin{definition}[Stackelberg game]
\label{def:stackelberg-generale}
Let $\mathcal{U}_L$ and $\mathcal{U}_F$ denote the admissible action sets of the leader and the follower, respectively, and let
$J_L : \mathcal{U}_L\times\mathcal{U}_F\to\mathbb{R}$ be the leader's loss and
$J_F : \mathcal{U}_L\times\mathcal{U}_F\to\mathbb{R}$ be the follower's payoff.
A \emph{Stackelberg game} with leader $L$ and follower $F$ is the hierarchical problem in which the leader chooses $u_L\in\mathcal{U}_L$ while anticipating that the follower will observe $u_L$ and respond with its best response.
\end{definition}

\begin{definition}[Follower best response]
\label{def:best-response}
Given a leader action $u_L\in\mathcal{U}_L$, the follower's \emph{best-response correspondence} is the multivalued map
\begin{equation}
\mathcal{B}_F(u_L) \;:=\; \argmax_{u_F\in\mathcal{U}_F} J_F(u_L,u_F),
\label{eq:best-response-set}
\end{equation}
and every element $u_F^\star\in\mathcal{B}_F(u_L)$ is a best response of the follower to $u_L$.
\end{definition}

\begin{definition}[Stackelberg equilibrium]
\label{def:stackelberg-eq}
A pair $(u_L^\star,u_F^\star)\in\mathcal{U}_L\times\mathcal{U}_F$ is a \emph{Stackelberg equilibrium} if
\begin{equation}
u_L^\star \;\in\; \argmin_{u_L\in\mathcal{U}_L} J_L\!\bigl(u_L,\,u_F^\star(u_L)\bigr),
\qquad
u_F^\star \;\in\; \mathcal{B}_F(u_L^\star),
\label{eq:stackelberg-eq}
\end{equation}
where $u_F^\star(u_L)\in\mathcal{B}_F(u_L)$ is a selection of the best-response correspondence in Eq.~\eqref{eq:best-response-set}.
\end{definition}

When $\mathcal{B}_F(u_L)$ is multivalued, a selection rule must be specified in order to make the leader's problem well posed. Two standard conventions are the \emph{strong} or \emph{optimistic} Stackelberg convention, in which the follower selects, among its best responses, the one that is most favorable to the leader, and the \emph{weak} or \emph{pessimistic} Stackelberg convention, in which the follower selects the worst one for the leader. In what follows, we adopt a fixed selection rule consistent with a worst-case interpretation from the leader's viewpoint.

\section{Agent Model}
\label{sec:model}

Let us consider a discrete-time multi-agent system composed of $n$ regular agents, called {\em defenders} and indexed by $\V=\{1,\ldots,n\}$, and one compromised agent, called the {\em target} and indexed by $t$. Physically, this models a scenario such as a formation of UAVs in which one vehicle has been hijacked by an adversary. We denote the set of all agents by $\mathcal A:=\V\cup\{t\}$.
A key structural assumption is that the hijacking is {\em partial}, i.e., each agent operates under a two-layer control architecture. The first layer is an {\em high-level command layer} generates a nominal motion setpoint, which, for the defenders, is chosen strategically by the distributed algorithm developed in this paper (in order to accomplish the common goal); for the target, this channel has been seized by the adversary, who freely selects the target's nominal command at every time step (with the aim to deviate itself and the rest of the agents from the common goal). On the other hand, the {\em low-level safety layer} is an automatic collision-avoidance primitive embedded in the onboard hardware of every agent. It generates pairwise repulsive displacements based on inter-agent distances and cannot be disabled or overridden by any high-level input. This layer is assumed to be identical and active on both defenders and the compromised target.
The defender strategy is built on exploiting this architecture. Since the low-level repulsion of the target responds automatically to the defenders' positions, the defenders can indirectly shape the net repulsive field experienced by the target simply by choosing where to move. This field-shaping perspective represents the core idea of the methodology: the containment problem is cast as a geometric game played over induced positions rather than over commands.

Let us assume that, at discrete time $k$, the position of defender $i\in\V$ is $x_i[k]\in\R^d$ and the position of the target is $x_t[k]\in\R^d$, with $d\in\{2,3\}$. We write $x[k]=\col(x_1[k],\ldots,x_n[k])$ for the stacked defender configuration and $\mathcal A:=\V\cup\{t\}$ for the set of all agents.

The pairwise repulsive safety term acting on agent $a\in\mathcal A$ due to agent $b\in\mathcal A$, $b\ne a$, is
\begin{equation}
    \rho_{ab}(z_a,z_b)
    =
    \chi_{ab}(\|z_a-z_b\|)
    \frac{z_a-z_b}{\|z_a-z_b\|},
    \qquad z_a\ne z_b,
    \label{eq:rho_pairwise}
\end{equation}
where $\chi_{ab}(\cdot)\ge 0$ is a scalar repulsion gain that is large at short inter-agent distances and may vanish beyond a sensing radius, while the unit vector $(z_a-z_b)/\|z_a-z_b\|$ points from $b$ toward $a$, so the displacement always pushes $a$ away from $b$. The gain $\chi_{ab}$ may differ across pairs, but the functional form is the same for every interaction in the network. Each agent executes this repulsion automatically as part of its low-level layer. Notably, it computes only pairwise distances, requires no global coordination, and cannot be suppressed by the high-level command.

The information available to the defenders for their decision-making is specified in the following assumption. The set of admissible adversarial commands $\U_a\subset\R^d$ (compact and convex) bounds the high-level commands the adversary may issue, where the subscript $a$ stands for \emph{adversary}; the closed convex set $\Omega_{k+1}\subset\R^d$ is the admissible region for the target at the next time step, where the subscript $t$ will always denote the compromised target (controlled by the adversary); and $x_B\in\R^d$ is the destination toward which the defenders may wish to steer the target.
Here, the notation is intentionally asymmetric: $a$ refers to the adversary, while $t$ refers to the target agent under adversarial influence. This distinction is used consistently throughout the paper, so $u_a[k]$ is the adversary's command and $x_t[k]$ is the target's state.
Moreover, for every unit direction $v\in\R^d$, the quantities $\supp_{\U_a}(v)$ and $\supp_{\Omega_{k+1}}(v)$ denote the support-function evaluations of the sets $\U_a$ and $\Omega_{k+1}$ along $v$, namely
\[
\supp_{\U_a}(v)=\sup_{u\in\U_a}u^\top v,
\qquad
\supp_{\Omega_{k+1}}(v)=\sup_{q\in\Omega_{k+1}}q^\top v.
\]

\begin{assumption}
\label{ass:target_measurement}
At the beginning of time step $k$, each defender $i\in\V$ has access to the current target position $x_t[k]$, either through direct sensing or over the communication network, as well as to the sets $\U_a$ and $\Omega_{k+1}$, the destination $x_B$ when specified, and the repulsion laws in \eqref{eq:rho_pairwise}. For every unit direction $v$, each defender can evaluate the support values $\supp_{\U_a}(v)$ and $\supp_{\Omega_{k+1}}(v)$, either from an explicit local description of the sets or from shared mission data.
\end{assumption}

Crucially, the {\em realized} adversarial command $u_a[k]$ is unknown to the defenders at decision time: they know only the feasible set $\U_a$, not which command the adversary will execute. This forces a worst-case approach and gives rise to the Stackelberg structure formalized in Section~\ref{sec:problem_geometry}, where defenders commit first, and the adversary responds. In the rest of the section, we describe how defender high-level commands are mapped into induced defender configurations through the low-level repulsive layer, and the compromised target dynamics under adversarial commands and the aggregate repulsive field generated by the defenders.

\subsection{Defender Commands and Induced Configurations}
Let us consider that, at the beginning of time step $k$, defender $i$ is located at $x_i[k]$ and
chooses a high-level command $u_i[k]\in\R^d$. The realized motion combines
this command with the repulsive displacements generated by the embedded
low-level safety layer due to the other defenders and to the target.
Let
\begin{align}
    u[k]&:=\col(u_1[k],\ldots,u_n[k]),
    \label{eq:stacked_commands}\\
    y[k]&:=\col(y_1[k],\ldots,y_n[k])
    \label{eq:stacked_configurations}
\end{align}
denote, respectively, the command selected at time step $k$ and the
induced defender configuration generated by that command. We set
\begin{equation}
    y[k]=f_{D,k}(u[k]),
    \label{eq:induced_configuration_map}
\end{equation}
where the command-to-configuration map $f_{D,k}$ is defined componentwise
as follows: for any candidate command $u=\col(u_1,\ldots,u_n)$, where
$\rho_{ab}(\cdot,\cdot)$ is the repulsive term in Eq.~\eqref{eq:rho_pairwise}, we have
\begin{align}
    &[f_{D,k}(u)]_i
    =
    x_i[k]+u_i+h_i(x[k])+q_i(x_i[k],x_t[k]),
    \label{eq:defender_induced_position}\\
    &h_i(x[k])
    :=
    \sum_{j\in\V\setminus\{i\}}
    \rho_{ij}(x_i[k],x_j[k]),
    \label{eq:defender_defender_repulsion}\\
    &q_i(x_i[k],x_t[k])
    :=
    \rho_{it}(x_i[k],x_t[k]).
    \label{eq:defender_target_repulsion}
\end{align}
Eq.~\eqref{eq:defender_induced_position} shows that the induced
position of defender $i$ is the superposition of four terms: current
position $x_i[k]$, commanded displacement $u_i$, cumulative
defender--defender repulsion $h_i(x[k])$, and defender--target repulsion
$q_i(x_i[k],x_t[k])$. Practically, the high-level command is filtered by the low-level safety layer,
so the realized displacement is shaped by nearby agents.
Thus, $u_i[k]$ is the chosen command, and the induced position of defender
$i$ is
\begin{equation}
    y_i[k]=[f_{D,k}(u[k])]_i.
    \label{eq:induced_component}
\end{equation}
This position is generated during time step $k$ after the defender's own
low-level repulsion has acted. After execution, the
physical state is updated by $x_i[k+1]=y_i[k]$.

Let $\U_{i,k}\subset\R^d$ be the compact convex command set of defender
$i$ at time $k$, and let
\begin{equation}
    \U_{D,k}:=
    \U_{1,k}\times\cdots\times\U_{n,k}
\end{equation}
be the Cartesian product of the local command sets.
The corresponding local set of feasible induced positions is
\begin{align}
    \Y_{i,k}
    :=
    \{z_i\in\R^d:\ 
    z_i
    &=
    x_i[k]+u_i+h_i(x[k])
    \notag\\
    &\quad+
    q_i(x_i[k],x_t[k]),
    \ u_i\in\U_{i,k}\},
    \label{eq:local_Yik}
\end{align}
which defines the set of all induced positions
that defender $i$ can realize at time $k$. In particular, it is the image of $\U_{i,k}$
through the affine map
$u_i\mapsto x_i[k]+u_i+h_i(x[k])+q_i(x_i[k],x_t[k])$, where the repulsive
offset is fixed by the current geometry.
The corresponding set of feasible induced configurations is
\begin{equation}
    \Y_k
    :=
    f_{D,k}(\U_{D,k})
    =
    \{y=f_{D,k}(u):u\in\U_{D,k}\},
    \label{eq:Yk}
\end{equation}
which describes the global feasible set in induced-position
space; it collects all defender configurations generated by admissible
commands once the low-level repulsive layer has acted. This is the natural
decision space for the following Stackelberg and regret formulations, since the
target responds to induced positions rather than to raw defender commands.
Notice that the defenders choose commands $u[k]$, but the target does not react to
$u[k]$ directly: it reacts to the induced configuration
$y[k]=f_{D,k}(u[k])$. For this reason, the Stackelberg game and the
regret analysis are formulated in the induced-position variable
$y[k]\in\Y_k$.

\subsection{Compromised Target Dynamics}

For a candidate defender configuration $y\in\Y_k$, the aggregate
repulsive field induced on the target is
\begin{equation}
    g_k(y)
    :=
    \sum_{i\in\V}\rho_{ti}(x_t[k],y_i).
    \label{eq:aggregate_field}
\end{equation}
This is the low-level repulsive field experienced by the target when
the defenders occupy configuration $y$.
In particular, each summand $\rho_{ti}(x_t[k],y_i)$ is the pairwise
repulsive contribution generated by defender $i$ on the target, using the
same law defined in Eq.~\eqref{eq:rho_pairwise}; therefore, $g_k(y)$ is the
vector superposition of all defender-induced low-level effects.

The adversary controlling the target chooses the target's high-level command
$u_a[k]\in\U_a$, where $\U_a\subset\R^d$ is compact and convex. Then, the target
command set is taken time-invariant for notational clarity; the same
formulation extends directly to compact convex sets $\U_{a,k}$. The target
then evolves according to
\begin{equation}
    x_t[k+1]
    =
    x_t[k]+u_a[k]+g_k(y[k]),
    \label{eq:target_dynamics}
\end{equation}
that makes explicit the two-layer structure
for the compromised target: the adversary injects the high-level component
$u_a[k]$, while the uncompromised low-level safety layer adds the
repulsive correction $g_k(y[k])$. Hence, for fixed $x_t[k]$ and
$u_a[k]$, defender positioning acts on the target only through $g_k(y[k])$.
The defenders move to their induced configuration, and
\begin{equation}
    x[k+1]=y[k],
    \label{eq:defender_update}
\end{equation}
closes the one-step interaction: the
induced configuration selected at time $k$ becomes the defenders' physical
state at time $k+1$. Together with Eq.~\eqref{eq:target_dynamics}, this gives
the closed-loop state transition used in the containment and Stackelberg
formulations developed in the next sections.

\section{Containment Geometry}
\label{sec:problem_geometry}

This section introduces geometric conditions for guaranteeing one-step robust containment and expresses them through direction-based quantities, providing the analytical foundation that will be exploited for both centralized optimization and distributed online updates.
Let $\Omega_k\subset\R^d$ be a closed convex admissible region for the
target at time $k$. The region may be fixed or time-varying and may encode
a safe area, a moving corridor, or any other admissible target set. Given
a defender configuration $y\in\Y_k$, define the worst-case one-step
violation
\begin{equation}
    \Psi_k(y)
    :=
    \max_{u_a\in\U_a}
    \dist^2
    \left(
        x_t[k]+u_a+g_k(y),
        \Omega_{k+1}
    \right).
    \label{eq:exact_violation}
\end{equation}
Thus, $\Psi_k(y)=0$ if and only if the target remains in the admissible region
$\Omega_{k+1}$ for every admissible adversarial command. Equivalently,
since $\Omega_{k+1}$ is closed,
\begin{equation}
    \Psi_k(y)=0
    \quad\Longleftrightarrow\quad
    x_t[k]+g_k(y)+\U_a\subseteq \Omega_{k+1},
    \label{eq:exact_containment_equivalence}
\end{equation}
that expresses the one-step robust containment condition.

If the defenders aim not only to keep the target within a fixed admissible region, but also to transport it toward a destination,
$x_B\in\mathbb{R}^d$, we can define the unit vector pointing from the current target position to the destination as
\begin{equation}
    d_B[k]
    :=
    \frac{x_B-x_t[k]}{\|x_B-x_t[k]\|},
    \qquad x_t[k]\ne x_B.
    \label{eq:destination_direction}
\end{equation}
Notice that the scalar $g_k(y)^\top d_B[k]$ is the projection of the induced field onto
the direction from the current target position to the destination. Positive
values correspond to a repulsive field component that moves the target
toward $x_B$. Hence, at each time step, a desirable induced configuration
$y\in\Y_k$ has two roles: it should reduce the worst-case containment
violation $\Psi_k(y)$ and, when a destination is specified, it should make
the transport component $g_k(y)^\top d_B[k]$ positive.

Notice that the representation in Eq.~\eqref{eq:exact_violation} is exact but contains a
maximization over $u_a\in\U_a$. therefore, for analysis and algorithm design, it is
useful to rewrite the same quantity in directional form. Let
\begin{equation}
    z_k(y):=x_t[k]+g_k(y)
    \label{eq:shifted_target_center}
\end{equation}
be the target position before the adversarial command is applied. For a
unit direction $v$, define the directional containment margin
\begin{equation}
    \ell_k(y;v)
    :=
    z_k(y)^\top v
    +
    \supp_{\U_a}(v)
    -
    \supp_{\Omega_{k+1}}(v),
    \label{eq:single_normal_margin}
\end{equation}
and the worst directional margin
\begin{equation}
    c_k(y)
    :=
    \sup_{\|v\|=1}\ell_k(y;v).
    \label{eq:normal_margin}
\end{equation}
The next result shows that these directional quantities provide an exact reformulation of the one-step worst-case violation in Eq.~\eqref{eq:exact_violation}, and are therefore not merely heuristic indicators.

\begin{proposition}
\label{prop:support_containment}
For every $y\in\Y_k$,
\begin{equation}
    \Psi_k(y)=[c_k(y)]_+^2 .
    \label{eq:support_violation_formula}
\end{equation}
Consequently,
\begin{equation}
    \Psi_k(y)=0
    \quad\Longleftrightarrow\quad
    c_k(y)\le 0.
    \label{eq:support_containment_equivalence}
\end{equation}
\end{proposition}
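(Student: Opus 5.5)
The plan is to reduce the statement to the classical duality formula for the distance to a closed convex set, $\dist(p,\Omega)=\sup_{\|v\|=1}\bigl[p^\top v-\supp_{\Omega}(v)\bigr]_+$, and then exchange the maximum over $u_a\in\U_a$ with the supremum over directions. Throughout we write $\Omega:=\Omega_{k+1}$ and $z:=z_k(y)$ from Eq.~\eqref{eq:shifted_target_center}. Then $\Psi_k(y)=\max_{u_a\in\U_a}\dist^2(z+u_a,\Omega)$, and the maximum is attained because $\U_a$ is compact and $\dist(\cdot,\Omega)$ is continuous (indeed $1$-Lipschitz).

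\emph{Step 1 (distance formula).} For $p\in\R^d$, we show $\dist(p,\Omega)=\bigl[\sup_{\|v\|=1}(p^\top v-\supp_\Omega(v))\bigr]_+$.
\begin{itemize}
\item For the upper bound on the supremum, fix any unit $v$ and any $q\in\Omega$. Then $p^\top v-q^\top v\le\|p-q\|$. Taking the infimum over $q$ gives $p^\top v-\supp_\Omega(v)\le\dist(p,\Omega)$, and the distance is nonnegative, so the bracketed quantity never exceeds $\dist(p,\Omega)$.
\item For the reverse inequality when $p\notin\Omega$, let $\bar q=\proj_\Omega(p)$ and $v=(p-\bar q)/\|p-\bar q\|$. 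This $v$ lies in $N_\Omega(\bar q)$ by the projection characterization, so $\supp_\Omega(v)=\bar q^\top v$. Hence $p^\top v-\supp_\Omega(v)=\|p-\bar q\|=\dist(p,\Omega)$.
\item When $p\in\Omega$, both sides are zero.
\end{itemize}
Note that $\supp_\Omega$ may be $+\infty$ in some directions if $\Omega$ is unbounded. Those directions simply contribute $-\infty$ and do no harm.

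\emph{Step 2 (exchange of max and sup).} Squaring is monotone on $[0,\infty)$, so
\[
\Psi_k(y)=\Bigl(\max_{u_a\in\U_a}\sup_{\|v\|=1}\bigl[(z+u_a)^\top v-\supp_\Omega(v)\bigr]_+\Bigr)^2 .
\]
Two iterated suprema always commute, and $[\cdot]_+$ commutes with suprema because it is monotone and continuous. The expression therefore equals
\[
\Bigl[\sup_{\|v\|=1}\sup_{u_a\in\U_a}\bigl((z+u_a)^\top v-\supp_\Omega(v)\bigr)\Bigr]_+^2 .
\]
The inner supremum is exactly $z^\top v+\supp_{\U_a}(v)-\supp_\Omega(v)=\ell_k(y;v)$ from Eq.~\eqref{eq:single_normal_margin}. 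This gives $\Psi_k(y)=[c_k(y)]_+^2$ with $c_k$ as in Eq.~\eqref{eq:normal_margin}.

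The equivalence \eqref{eq:support_containment_equivalence} then follows immediately, since $[c]_+^2=0$ if and only if $c\le0$.

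\emph{Main obstacle.} The only delicate point is Step 1's attainment argument. It relies on the projection onto $\Omega$ existing and on $p-\proj_\Omega(p)$ being a normal vector, which needs $\Omega$ closed, convex and nonempty; we assume nonemptiness implicitly. A related subtlety is handling the extended-real values of $\supp_\Omega$, which is dispatched by the remark about $-\infty$ contributions in Step 1. Everything else is an interchange of suprema, which requires no minimax theorem.
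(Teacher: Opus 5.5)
Your proof is correct and follows essentially the same route as the paper. Both arguments use the support-function dual formula for $\dist(p,\Omega_{k+1})$, with attainment via the projection/normal-cone argument, followed by an interchange of the suprema over $u_a$ and $v$ and the identity $\sup_{u_a\in\U_a}(z+u_a)^\top v=z^\top v+\supp_{\U_a}(v)$. The only cosmetic difference is that you carry the positive part on the unit sphere from the start, whereas the paper works over the closed unit ball (using $v=0$ for the interior case) and recovers $[c_k(y)]_+$ at the end via positive homogeneity.
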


\begin{proof}
Set $A_k(y):=z_k(y)+\U_a$. By construction, it holds
\[
\Psi_k(y)=\left(\max_{p\in A_k(y)}\dist(p,\Omega_{k+1})\right)^2.
\]
By definition, we have that for all $v \in \mathbb{R}^d$, with $\|v\|\leq 1$, for all  $p\in A_k(y)$, and all $q \in \Omega_{k+1}$,
$$
\supp_{\Omega_{k+1}}(v) \geq q^\top v.
$$
Therefore, 
$$
p^\top v - \supp_{\Omega_{k+1}}(v) \leq (p-q)^\top v\leq \|p-q\| \|v\| \leq \|p-q\|
$$
Since the above inequality holds for all $q\in \Omega_{k+1}$ we have that
$$
p^\top v - \supp_{\Omega_{k+1}}(v) \leq \inf_{q\in \Omega_{k+1}} \|p-q\| = \dist(p,\Omega_{k+1})
$$
Taking the supremum over all $v$ with $\|v\|\leq 1$ gives
\begin{equation}
    \sup_{\|v\|\leq 1}
    \left(
        p^\top v-\supp_{\Omega_{k+1}}(v)
    \right)
    \leq
    \dist(p,\Omega_{k+1}).
    \label{eq:distance_support_upper}
\end{equation}

We now prove the reverse inequality. Suppose first that
$p\notin\Omega_{k+1}$, and let
\[
    q^\star:=\proj_{\Omega_{k+1}}(p),
    \qquad
    v^\star:=
    \frac{p-q^\star}{\|p-q^\star\|}.
\]
The characterization of the Euclidean projection onto a closed convex
set gives 
\[
    (p-q^\star)^\top(q-q^\star)\leq 0,
    \qquad \forall q\in\Omega_{k+1}.
\]
Since $v^\star$ is a positive multiple of $p-q^\star$, it follows that
\[
    q^\top v^\star\leq(q^\star)^\top v^\star,
    \qquad \forall q\in\Omega_{k+1}.
\]
Therefore,
\[
    \supp_{\Omega_{k+1}}(v^\star)
    =(q^\star)^\top v^\star.
\]
Consequently,
\begin{align}
    p^\top v^\star-\supp_{\Omega_{k+1}}(v^\star)
    &=(p-q^\star)^\top v^\star\\
    &=\|p-q^\star\|\\
    &=\dist(p,\Omega_{k+1}).
\end{align}
Thus, the supremum in \eqref{eq:distance_support_upper} is at least
$\dist(p,\Omega_{k+1})$.

If $p\in\Omega_{k+1}$, then the distance is zero. Moreover,
$\supp_{\Omega_{k+1}}(v)\geq p^\top v$ for every $v$, while $v=0$
gives equality. Hence the same identity holds. We have therefore shown
that, for every $p\in\mathbb R^d$,
\begin{equation}
    \dist(p,\Omega_{k+1})
    =
    \sup_{\|v\|\leq1}
    \left(
        p^\top v-\supp_{\Omega_{k+1}}(v)
    \right).
    \label{eq:point_distance_support}
\end{equation}
Maximizing \eqref{eq:point_distance_support} over
$p\in A_k(y)$ gives
\begin{align}
    \max_{p\in A_k(y)}
    \dist(p,\Omega_{k+1})
    &=
    \sup_{p\in A_k(y)}
    \sup_{\|v\|\leq1}
    \left(
        p^\top v-\supp_{\Omega_{k+1}}(v)
    \right)                                                   \\
    &=
    \sup_{\|v\|\leq1}
    \sup_{p\in A_k(y)}
    \left(
        p^\top v-\supp_{\Omega_{k+1}}(v)
    \right)                                                   \\
    &=
    \sup_{\|v\|\leq1}
    \left(
        \supp_{A_k(y)}(v)
        -\supp_{\Omega_{k+1}}(v)
    \right).
    \label{eq:set_distance_support}
\end{align}
The order of the two suprema can be exchanged because each expression
equals the supremum of the same function over the product set
\[
A_k(y)\times
\{v\in\mathbb{R}^d:\|v\|\leq1\}.
\]

Since $A_k(y)=z_k(y)+\U_a$, the support function of $A_k(y)$ satisfies
\begin{equation}
    \supp_{A_k(y)}(v)
    =
    z_k(y)^\top v+\supp_{\U_a}(v).
    \label{eq:reachable_support}
\end{equation}
Substituting \eqref{eq:reachable_support} into
\eqref{eq:set_distance_support} and recalling the definition of
$\ell_k(y;v)$ yields
\begin{equation}
    \max_{p\in A_k(y)}
    \dist(p,\Omega_{k+1})
    =
    \sup_{\|v\|\leq1}\ell_k(y;v).
    \label{eq:ball_margin}
\end{equation}

It remains to express the supremum over the unit ball in terms of unit
directions. The choice $v=0$ gives $\ell_k(y;0)=0$. Every nonzero vector
in the unit ball can be written as
\[
    v=\alpha w,
    \qquad
    \alpha=\|v\|\in(0,1],
    \qquad
    \|w\|=1.
\]
By the positive homogeneity of support functions,
\[
    \ell_k(y;\alpha w)
    =
    \alpha\,\ell_k(y;w).
\]
Consequently, for each unit vector $w$,
\[
    \sup_{\alpha\in[0,1]}
    \ell_k(y;\alpha w)
    =
    \max\{0,\ell_k(y;w)\}.
\]
Taking the supremum over all unit directions gives
\begin{align}
    \sup_{\|v\|\leq1}\ell_k(y;v)
    &=
    \sup_{\|w\|=1}\max\{0,\ell_k(y;w)\} \\
    &=
    \max\left\{
        0,\sup_{\|w\|=1}\ell_k(y;w)
    \right\} \\
    &=[c_k(y)]_+.
    \label{eq:ball_to_sphere_margin}
\end{align}
Therefore
\begin{equation}
    \max_{p\in A_k(y)}\dist(p,\Omega_{k+1})
    =
    [c_k(y)]_+ .
\end{equation}
By squaring both sides, we obtain Eq.~\eqref{eq:support_violation_formula}, and
Eq.~\eqref{eq:support_containment_equivalence} follows immediately.
Finally, since $[c_k(y)]_+^2=0$ if and only if
$[c_k(y)]_+=0$, and the latter holds if and only if $c_k(y)\leq0$,
Eq.~\eqref{eq:support_containment_equivalence} follows.
The proof is complete. 
\color{black}
\end{proof}

The purpose of this proposition is twofold. First, it provides an equivalent scalar certificate
for robust one-step containment through $c_k(y)\le 0$. Second, it rewrites
the original worst-case term $\Psi_k(y)$ in a form directly exploitable by
the centralized and distributed updates (active-direction construction in
the next sections).

Moreover, the normal-cone interpretation clarifies the geometry of active
directions. Specifically, let us consider the special case of a fixed admissible region,
$\Omega_{k+1}=\Omega_k=\Omega$, and a target on the boundary,
$x_t[k]\in\partial\Omega$. For any outward unit normal
$v\in N_{\Omega}(x_t[k])$, where $N_{\Omega}(x_t[k])$ is the normal cone according to Eq.~\eqref{eq:normal_cone}, it holds 
\begin{equation}
    \supp_{\Omega}(v)=x_t[k]^\top v,
\end{equation}
so Eq.~\eqref{eq:single_normal_margin} becomes
\begin{equation}
    \ell_k(y;v)=g_k(y)^\top v+\supp_{\U_a}(v).
\end{equation}
Hence, along active outward normals, the defenders must generate an inward
field component that compensates the adversary's support term. This is the
bridge from the support-function characterization to the geometric notion
of a field-induced cage.
These equivalent viewpoints motivate the following definition.
\begin{definition}[Repulsive Cage]
A configuration $y\in\Y_k$ is a robust repulsive cage at time $k$ if any
of the equivalent conditions
\begin{equation}
    \Psi_k(y)=0,
    \qquad
    z_k(y)+\U_a\subseteq\Omega_{k+1},
    \qquad
    c_k(y)\le 0
\end{equation}
holds. It is a transporting repulsive cage with transport margin $m_B>0$
if, in addition,
\begin{equation}
    g_k(y)^\top d_B[k]\ge m_B.
\end{equation}
\end{definition}

Notice that $m_B$ is a design parameter
that prescribes the minimum desired component of the induced field along
the destination direction $d_B[k]$: larger values enforce more aggressive
transport toward $x_B$, while smaller values prioritize containment
feasibility.
\color{black}
In the following, this terminology is used to connect geometry and control:
the centralized Stackelberg oracle seeks configurations with small
violation and positive transport component, while the distributed
algorithm tracks those configurations online under local information and
communication constraints.

\begin{figure}[t]
\centering
\begin{tikzpicture}[scale=0.92, >=Latex, font=\small]
    \fill[blue!6] (0,0) ellipse (2.05 and 1.25);
    \draw[blue!55!black, thick] (0,0) ellipse (2.05 and 1.25);
    \node[blue!55!black] at (-1.15,-1.55) {$\Omega$};

    \coordinate (xt) at (2.05,0.12);
    \node[circle, draw=red!70!black, fill=red!55, inner sep=2.4pt,
          label={[red!70!black]above left:$x_t[k]$}] at (xt) {};

    \draw[->, thick, red!70!black] (xt) -- ++(0.95,0.08)
        node[above] {$v$};
    \draw[->, thick, orange!85!black] (xt) -- ++(0.70,-0.42)
        node[below] {$u_a$};
    \draw[->, very thick, green!45!black] (xt) -- ++(-1.15,-0.16)
        node[below left] {$g_k(y)$};

    \foreach \p/\name in {(3.25,0.95)/1,(3.32,0.25)/2,(3.08,-0.52)/3,(2.72,-1.08)/4}
    {
        \node[circle, draw=black!60, fill=black!20, inner sep=2.8pt] (d\name) at \p {};
    }

    \draw[dashed, gray!70] (d1) -- (xt);
    \draw[dashed, gray!70] (d2) -- (xt);
    \draw[dashed, gray!70] (d3) -- (xt);
    \draw[dashed, gray!70] (d4) -- (xt);
    \node[black!70] at (3.35,1.28) {defenders};

    \node[align=center, text width=3.9cm] at (-1.05,1.65)
    {$\ell_k(y;v)\le 0$\\
    blocks outward motion};
\end{tikzpicture}
\caption{Boundary geometry for a fixed admissible set. The adversary has
an outward component along $v$, while defenders on the outward side of the
target induce an inward aggregate field $g_k(y)$ through the target's
low-level safety behavior.}
\label{fig:geometry}
\end{figure}

\section{Centralized Stackelberg Oracle}
\label{sec:centralized}

The previous section describes when an induced defender configuration
contains the target against all admissible adversarial commands. The goal of this section is to turn this containment requirement into a one-step Stackelberg game. At
time $k$, the defenders commit first to an induced configuration
$y\in\Y_k$. After observing this configuration, the adversary chooses the
target command $u_a\in\U_a$. In this sense, the defenders are the
Stackelberg leader and the adversary is the follower.
Notice that this sequencing assumption is operationally plausible. Once the defenders
identify that the target node has been compromised, they are the first to
start the containment procedure by selecting their induced configuration;
the compromised target then reacts through the adversarial command channel.

We first consider a centralized oracle, namely a full-information
benchmark for this stage game. The oracle has access to the full state, i.e., all variables needed by the stage problem
are centrally available at time $k$, in particular, the current defender
configuration $x[k]$, the target position $x_t[k]$, and the geometric sets
defining the one-step constraints and costs,
the feasible set $\Y_k$, the admissible target set $\Omega_{k+1}$, and
the adversarial command set $\U_a$. The oracle objective also includes a
defender-side penalty that selects among configurations with comparable
containment performance.
Notably, the oracle can be interpreted as the outcome of
perfectly shared defender intelligence and unlimited coordination. In this
paper, it is not intended as an additional physical controller, but a reference
solution used to define the performance target for the distributed scheme.

To make this objective concrete, we introduce a defender-side
regularization term that ranks configurations with similar containment
quality according to motion effort and separation preferences. Let
$\mu_x,\mu_d,\mu_t\ge 0$ be tuning weights, and let $d_{\min}>0$ and
$d_{\min}^t>0$ be desired defender--defender and defender--target
separations, respectively. For $y\in\Y_k$, we define the defender-side
regularization penalty as follows:
\begin{align}
    R_k(y)
    &=
    \mu_x\|y-x[k]\|^2
    \notag\\
    &\quad+
    \mu_d\sum_{i<j}
    [d_{\min}-\|y_i-y_j\|]_+^2
    \notag\\
    &\quad+
    \mu_t\sum_{i\in\V}
    [d_{\min}^t-\|y_i-x_t[k]\|]_+^2 .
    \label{eq:defender_penalty}
\end{align}
The first term penalizes motion from the current defender configuration, while the second term is active when two defenders are closer than
$d_{\min}$, and the third when a defender is closer than $d_{\min}^t$ to
the target. This is coherent to the fact that $R_k$ regularizes the leader's
choice: it discourages large moves and degenerate configurations.
The following regularity assumption guarantees that the Stackelberg game is
well posed.

\begin{assumption}
\label{ass:stage_regularity}
For each $k$, the sets $\U_a$ and $\Y_k$ are nonempty and compact,
$\Omega_{k+1}$ is closed and convex, $g_k$ is continuous on $\Y_k$, and
$R_k$ is continuous on $\Y_k$.
\end{assumption}

Let us now consider the adversarial response for a defender configuration
$y\in\Y_k$. We first define the follower payoff
\begin{equation}
    P_k(y,u_a)
    :=
    \dist^2
    \left(
        x_t[k]+u_a+g_k(y),
        \Omega_{k+1}
    \right),
    \label{eq:follower_payoff}
\end{equation}
which measures the one-step escape from the admissible set induced
by the command $u_a$. The follower then chooses the command that maximizes
this payoff:
\begin{equation}
    u_a^\star[k](y)
    \in
    \argmax_{u_a\in\U_a}
    P_k(y,u_a),
    \label{eq:follower_br}
\end{equation}
Accordingly, the induced worst-case violation is
\begin{equation}
    \Psi_k(y)=\max_{u_a\in\U_a}P_k(y,u_a).
\end{equation}
Notice that if several adversarial commands attain the maximum, the particular
selection of $u_a^\star[k](y)$ does not affect the centralized defender
problem; only the worst-case value $\Psi_k(y)$ is used.

The next two subsections complete the leader side of the stage game. Specifically, we define the centralized
we first define the centralized leader objective and the associated oracle minimization problem, and then we provide the corresponding
optimality condition in convex regimes.

\subsection{Centralized Defender Problem}

This subsection formulates the leader optimization problem obtained after
accounting for the follower's worst-case response.
The centralized leader anticipates the adversarial response and evaluates
each defender configuration through the reduced stage loss
\begin{equation}
    F_k(y)
    =
    \alpha\Psi_k(y)
    +
    \beta[m_B-g_k(y)^\top d_B[k]]_+^2
    +
    R_k(y),
    \label{eq:reduced_loss}
\end{equation}
where $\alpha,\beta>0$. In particular, the first term penalizes worst-case containment
violation, the second term penalizes transport deficit with respect to the
target margin $m_B$ (it is active only when $g_k(y)^\top d_B[k]<m_B$), and the third is
the defender-side penalty in Eq.~\eqref{eq:defender_penalty}.
Notice that according to Proposition \ref{prop:support_containment}, the same loss can be
written as
\begin{equation}
    F_k(y)
    =
    \alpha[c_k(y)]_+^2
    +
    \beta[m_B-g_k(y)^\top d_B[k]]_+^2
    +
    R_k(y).
    \label{eq:reduced_loss_support}
\end{equation}
This representation is the one used by the distributed update below: the
maximization over directions is handled through an active direction of
$c_k$.
Therefore, the centralized Stackelberg oracle is any minimizer
\begin{equation}
    y_k^\star
    \in
    \argmin_{y\in\Y_k} F_k(y).
    \label{eq:centralized_oracle}
\end{equation}
The following theorem formalizes the existence of a solution to the centralized Stackelberg oracle problem.
\begin{theorem}
\label{thm:oracle_existence}
Let Assumption~\ref{ass:stage_regularity} hold. For every time $k$ and every
$y\in\Y_k$, the adversarial response set
\begin{equation}
    \argmax_{u_a\in\U_a}P_k(y,u_a)
\end{equation}
is nonempty. Moreover, the centralized Stackelberg oracle set
\begin{equation}
    \argmin_{y\in\Y_k}F_k(y)
\end{equation}
is nonempty and compact.
\end{theorem}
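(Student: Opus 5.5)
The plan is to apply the Weierstrass extreme value theorem twice: once to the follower problem for fixed $y$, and once to the leader problem. The only real work is checking continuity of the relevant functions.

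\emph{Follower side.} Fix $k$ and $y\in\Y_k$. The map $p\mapsto\dist(p,\Omega_{k+1})$ is $1$-Lipschitz, since $|\dist(p,C)-\dist(p',C)|\le\|p-p'\|$ for any nonempty set $C$. I will assume $\Omega_{k+1}\neq\emptyset$, which is implicit in the definition of $\dist$. Hence $u_a\mapsto P_k(y,u_a)=\dist^2(x_t[k]+u_a+g_k(y),\Omega_{k+1})$ is continuous on $\U_a$. Because $\U_a$ is nonempty and compact by Assumption~\ref{ass:stage_regularity}, the maximum is attained, and the argmax set is nonempty. It is also closed, being a superlevel set of a continuous function, and therefore compact.

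\emph{Leader side.} The key step is continuity of $F_k$ on $\Y_k$. I will handle its three terms separately.
\begin{itemize}
\item For $\Psi_k$, the joint map $(y,u_a)\mapsto P_k(y,u_a)$ is continuous on $\Y_k\times\U_a$. This follows because $g_k$ is continuous and the distance is Lipschitz. By the standard maximum theorem (Berge), with the constant compact constraint set $\U_a$, the value $\Psi_k(y)=\max_{u_a}P_k(y,u_a)$ is continuous in $y$. A direct alternative uses the Lipschitz bound $|\Psi_k(y)^{1/2}-\Psi_k(y')^{1/2}|\le\|g_k(y)-g_k(y')\|$. This holds because the maximum over $u_a$ of a family of functions, each $1$-Lipschitz in the common shift $g_k(y)$, is itself $1$-Lipschitz in that shift. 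Squaring then preserves continuity. I will use this direct bound, since it avoids citing Berge.
\item The transport term $[m_B-g_k(y)^\top d_B[k]]_+^2$ is a composition of continuous maps, since $d_B[k]$ is a fixed vector given $x_t[k]\neq x_B$.
\item $R_k$ is continuous by assumption.
\end{itemize}
Thus $F_k$ is continuous on the nonempty compact set $\Y_k$, and Weierstrass gives a minimizer. The argmin set equals $\{y\in\Y_k:F_k(y)\le\min F_k\}$. This is closed as the preimage of a closed set under a continuous map, and it is a subset of the compact set $\Y_k$, so it is compact.

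I expect the main obstacle to be continuity of the worst-case term $\Psi_k$, since it is defined through an inner maximization. The shift-Lipschitz argument above resolves it cleanly. Proposition~\ref{prop:support_containment} is not needed, although the identity $\Psi_k=[c_k]_+^2$ gives an alternative route. That route would show continuity through $c_k(y)=\sup_{\|v\|=1}\ell_k(y;v)$, whose dependence on $y$ enters only via $z_k(y)^\top v$ and is $1$-Lipschitz in $g_k(y)$.
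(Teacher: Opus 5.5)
Your proposal is correct and matches the paper's proof: Weierstrass on $\U_a$ for the follower, continuity of each term of $F_k$ on the compact set $\Y_k$ for the leader, and closedness of the minimizing level set for compactness. The only difference is that you prove continuity of $\Psi_k$ with the bound $|\Psi_k(y)^{1/2}-\Psi_k(y')^{1/2}|\le\|g_k(y)-g_k(y')\|$ instead of citing Berge's maximum theorem as the paper does; your bound is valid and slightly more self-contained.
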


\begin{proof}
The proof proceeds in two steps, corresponding to the two claims in the
statement.
Let us first prove the nonemptiness of the adversarial response set.
Let us fix a time $k$ and $y\in\Y_k$. The target's next position is
$x_t[k]+u_a+g_k(y)$, which is an affine (hence continuous) function of
$u_a$. The squared distance to the closed convex set $\Omega_{k+1}$ is
continuous in its argument, so the composition
\[
    u_a \;\mapsto\; P_k(y,u_a)
    = \dist^2\!\left(x_t[k]+u_a+g_k(y),\,\Omega_{k+1}\right)
\]
is continuous on $\U_a$. Since $\U_a$ is compact and $P_k(y,\cdot)$ is
continuous, the Weierstrass extreme value theorem guarantees that the
maximum is attained, i.e., $\argmax_{u_a\in\U_a}P_k(y,u_a)\neq\emptyset$.

Let us now address the nonemptiness of the oracle set.
We show that $F_k$ is continuous on the compact set $\Y_k$, so that the
Weierstrass theorem applies again.
Let us analyze the continuity of each term in $F_k$. 
 The joint map
    $(y,u_a)\mapsto P_k(y,u_a)$ is continuous on the compact set
    $\Y_k\times\U_a$, since $g_k$ is continuous on $\Y_k$ by
    Assumption~\ref{ass:stage_regularity} and the remaining operations
    (affine map, squared distance) are continuous. By Berge's maximum
    theorem, the value function
    $\Psi_k(y)=\max_{u_a\in\U_a}P_k(y,u_a)$
    is therefore continuous on $\Y_k$.
For the transport term, we have that the inner product $g_k(y)^\top d_B[k]$ is continuous in $y$ since
    $g_k$ is continuous. The map $s\mapsto[m_B-s]_+^2$ is continuous,
    so $y\mapsto[m_B-g_k(y)^\top d_B[k]]_+^2$ is continuous on $\Y_k$. Finally, the continuity of $R_k$ holds directly by
    Assumption~\ref{ass:stage_regularity}.
Since $F_k = \alpha\Psi_k + \beta[\cdots]_+^2 + R_k$ is a sum of
continuous functions scaled by positive constants, $F_k$ is continuous on
$\Y_k$. As $\Y_k$ is compact, the Weierstrass theorem guarantees that
$F_k$ attains its minimum, so $\argmin_{y\in\Y_k}F_k(y)\neq\emptyset$.
  Finally, compactness follows because the oracle set is a closed subset of
  the compact set $\Y_k$: it equals $F_k^{-1}(\{m_k\})\cap\Y_k$, where
  $m_k:=\min_{y\in\Y_k}F_k(y)$, and the continuity of $F_k$ ensures that
  the level set is closed. This completes our proof.
\end{proof}

Notice that Theorem~\ref{thm:oracle_existence} does not require uniqueness of the
oracle. In general, several configurations may attain the same minimum
value of $F_k$. The compactness of
$\argmin_{y\in\Y_k}F_k(y)$ guarantees that it is always possible to select at least
one representative $y_k^\star$ (for instance, via a deterministic
tie-breaking rule). This is useful in the online analysis, where the
benchmark sequence $\{y_k^\star\}$ and its path variation are defined.

\subsection{First-Order Characterization}

This subsection derives the first-order optimality condition associated
with the centralized oracle problem.
Notice that the exact field-shaping problem may be nonconvex because of the repulsive
geometry and the feasible induced-configuration set. Still, it is useful
to record the stationarity condition that an oracle solution satisfies in
convex regimes: the local decrease directions of the reduced loss must be
balanced by the normal directions generated by the feasible set. In particular, the
following proposition states this condition.
Let us first introduce the following assumption.
\begin{assumption}
\label{ass:convex_oracle}
Fix a time instant $k$. The set $\Y_k$ is nonempty, closed, and convex, and
$F_k$ is convex and subdifferentiable on a neighborhood of $\Y_k$.
\end{assumption}
We are now in position to formalize the convex oracle optimality.
\begin{proposition}
\label{prop:oracle_optimality}
Under Assumption~\ref{ass:convex_oracle}, a point
$y_k^\star\in\Y_k$ solves Eq.~\eqref{eq:centralized_oracle} if and only if
\begin{equation}
    0\in \partial F_k(y_k^\star)+N_{\Y_k}(y_k^\star).
    \label{eq:oracle_optimality}
\end{equation}
\end{proposition}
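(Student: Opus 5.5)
The plan is to reduce the constrained problem to an unconstrained one through the indicator function of $\Y_k$, and then apply Fermat's rule together with a subdifferential sum rule. Let $\iota_{\Y_k}$ denote the indicator of $\Y_k$, which is $0$ on $\Y_k$ and $+\infty$ otherwise. Since $\Y_k$ is nonempty, closed and convex by Assumption~\ref{ass:convex_oracle}, $\iota_{\Y_k}$ is proper, convex and lower semicontinuous. Its subdifferential at any $y\in\Y_k$ is exactly the normal cone $N_{\Y_k}(y)$ of Eq.~\eqref{eq:normal_cone}. This follows directly from the definition: $v\in\partial\iota_{\Y_k}(y)$ if and only if $0\ge v^\top(q-y)$ for all $q\in\Y_k$. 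Problem~\eqref{eq:centralized_oracle} is then equivalent to the unconstrained minimization of $\Phi_k:=F_k+\iota_{\Y_k}$ over $\R^d$ (more precisely, over $\R^{nd}$).

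The first step is Fermat's rule for convex functions. A point $y_k^\star$ minimizes $\Phi_k$ if and only if $0\in\partial\Phi_k(y_k^\star)$, because $0\in\partial\Phi_k(y^\star)$ means precisely that $\Phi_k(y)\ge\Phi_k(y^\star)+0^\top(y-y^\star)$ for every $y$. The second step is the sum rule $\partial(F_k+\iota_{\Y_k})(y)=\partial F_k(y)+N_{\Y_k}(y)$ for $y\in\Y_k$. Combining the two steps gives Eq.~\eqref{eq:oracle_optimality} in both directions.

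The sufficiency direction does not need the sum rule in full and can be checked directly. Suppose $0=s+v$ with $s\in\partial F_k(y_k^\star)$ and $v\in N_{\Y_k}(y_k^\star)$. Then for any $y\in\Y_k$,
\[
F_k(y)\ge F_k(y_k^\star)+s^\top(y-y_k^\star)=F_k(y_k^\star)-v^\top(y-y_k^\star)\ge F_k(y_k^\star).
\]
This uses only the subgradient inequality, which holds on the neighborhood where $F_k$ is convex and therefore on $\Y_k$, and the definition of the normal cone.

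The main obstacle is the necessity direction, which relies on the Moreau--Rockafellar sum rule. This rule requires a qualification condition, for instance that some point of $\operatorname{dom}\iota_{\Y_k}=\Y_k$ lies in the interior of $\operatorname{dom}F_k$, with $F_k$ continuous there. Assumption~\ref{ass:convex_oracle} gives this: $F_k$ is finite and convex on an open neighborhood $\mathcal{O}$ of $\Y_k$, and hence continuous on $\mathcal{O}$. To make this rigorous I would extend $F_k$ to a proper convex function $\tilde F_k$ that agrees with $F_k$ near $\Y_k$. One option is $\tilde F_k=F_k+\iota_{C}$ for a closed convex neighborhood $C\subset\mathcal{O}$ of $\Y_k$. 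Another is to work locally and note that subdifferentials of convex functions are local objects. Either way, $\partial\tilde F_k=\partial F_k$ on $\Y_k$, and the interior of $\operatorname{dom}\tilde F_k$ meets $\Y_k$, so the qualification condition holds.

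If one wants to avoid citing the sum rule, a direct alternative is available. Given a minimizer $y^\star$, the one-sided directional derivative $F_k'(y^\star;y-y^\star)$ is nonnegative for all $y\in\Y_k$. The directional derivative equals the support function of $\partial F_k(y^\star)$, which is nonempty and compact by continuity. A minimax or separation argument on the compact convex set $\partial F_k(y^\star)$ and the cone generated by $\Y_k-y^\star$ then yields some $s\in\partial F_k(y^\star)$ with $-s\in N_{\Y_k}(y^\star)$. This is the step I expect to require the most care.
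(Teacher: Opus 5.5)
Your proposal is correct and follows the same route as the paper. You rewrite the problem as the unconstrained minimization of $F_k$ plus the indicator of $\Y_k$, apply Fermat's rule, split the subdifferential with the sum rule, and identify the indicator's subdifferential with $N_{\Y_k}$. The paper cites the sum rule without comment, so your check of the Moreau--Rockafellar qualification condition (which the ``convex and subdifferentiable on a neighborhood'' hypothesis supplies) and your direct proof of sufficiency add rigor without changing the argument.
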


\begin{proof}
Let us consider the time index $k$ and abbreviate $\Y:=\Y_k$ and $F:=F_k$ to improve readability.
Let us also consider the indicator function of the feasible set,
\[
    I_{\Y}(y)
    =
    \begin{cases}
        0, & y\in\Y,\\
        +\infty, & y\notin\Y.
    \end{cases}
\]
Then, minimizing $F$ over $\Y$ is equivalent to minimizing the extended-value
function $F+I_{\Y}$ over the whole space $\R^d$. In other words,
$y_k^\star$ solves Eq.~\eqref{eq:centralized_oracle} if and only if it is a
global minimizer of $F+I_{\Y}$.
Since $F$ is convex and subdifferentiable on a neighborhood of $\Y$, the
standard convex Fermat rule applies to the unconstrained problem and yields
\[
    y_k^\star\text{ is optimal}
    \quad\Longleftrightarrow\quad
    0\in\partial(F+I_{\Y})(y_k^\star).
\]
At this point we need to compute the subdifferential of the sum. By the convex
subdifferential sum rule,
\[
    \partial(F+I_{\Y})(y)
    =
    \partial F(y)+\partial I_{\Y}(y).
\]
The subdifferential of the indicator is exactly the normal cone,
$\partial I_{\Y}(y)=N_{\Y}(y)$, where $N_{\Y}$ is defined in
Eq.~\eqref{eq:normal_cone}. Therefore,
\[
    0\in\partial(F+I_{\Y})(y_k^\star)
    \quad\Longleftrightarrow\quad
    0\in\partial F(y_k^\star)+N_{\Y}(y_k^\star),
\]
which is the claimed condition. The proof is complete.
\end{proof}

Notice that is $F_k$ is differentiable, \eqref{eq:oracle_optimality} reduces to
\begin{equation}
    -\nabla F_k(y_k^\star)\in N_{\Y_k}(y_k^\star).
\end{equation}
Thus the negative descent direction must be balanced by the feasible-set
geometry. If $y_k^\star$ lies in the relative interior of $\Y_k$, this
condition reduces further to $\nabla F_k(y_k^\star)=0$.

\begin{remark}
For the original nonconvex field-shaping problem, the same expression is
best read as a surrogate or local stationarity condition, with convex
subdifferentials replaced by the appropriate nonsmooth generalized
differential when needed.
\end{remark}

\section{Distributed Online Implementation}
\label{sec:distributed}
\color{black}
The centralized oracle in Eq.~\eqref{eq:centralized_oracle} provides the
stage benchmark $y_k^\star$, but its computation requires global information
about the defender configuration and the aggregate field $g_k(y)$. The
goal of this section is to build a fully distributed online controller
that approximates this benchmark using only local sensing and
distributed communication.

Specifically, let us consider a set of $n$ regular agents (defenders) intaracting according to an undirected graph topology $\mathcal{G}=(\mathcal{V}, \mathcal{E})$. At time step $k$, defender $i$ computes its command using only locally
available data, i.e., its own state and command history, target measurements
available under Assumption~\ref{ass:target_measurement}, and messages
received from neighbors in $\N_i[\tau]$ for $\tau\le k$. In particular,
defenders know the current target position and the admissible adversarial
set, but they do not know the realized adversarial command at decision
time. The policy is therefore causal and distributed when each $u_i[k]$ is a
function of this local information only.
The communication model from Assumption \ref{ass:graph} is used to
reconstruct global field information in a distributed fashion. The mixing
weights $w_{ij}[k]$ determine how defender $i$ combines neighbor data at
time $k$, and they enable online tracking of the aggregate repulsive field
without centralized aggregation. The resulting controller is built from the
support-based loss representation in Eq.~\eqref{eq:reduced_loss_support}:
at each stage, defenders exploit a direction-dependent surrogate of the robust
containment term and treat the mismatch with the exact worst-case direction
as a controllable approximation error.
To evaluate this online approximation, we adopt a regret viewpoint. In particular,
regret measures the cumulative loss gap between the distributed decisions
and a reference benchmark, and in our case the reference is the
time-varying centralized oracle sequence.
Before introducing the full update, we decompose the construction into three
steps. We first define the online performance metrics and the target
problem and introduce the distributed field-tracking recursion.
Then, we combine the previous parts into the active-direction
distributed update used by the defenders.

\subsection{Online Performance Criterion}

This subsection formalizes the online objective and the performance metrics
used to evaluate causal distributed containment policies.
As standard in online optimization, regret measures the cumulative loss
gap between the realized online decisions and a reference benchmark.
The benchmark is time-varying since the stage-wise oracle sequence is not necessarily stationary (i.e., the optimal configuration may change over time, leading to a nonzero path variation of
$\{y_k^\star\}$), so we use dynamic regret rather than static regret.
Let $y[k]$ be the induced configuration generated by a causal distributed
policy and let $y_k^\star$ be the centralized oracle in
Eq.~\eqref{eq:centralized_oracle}. The dynamic regret over horizon $T$ is
\begin{equation}
    \operatorname{Reg}_T^{\operatorname{dyn}}
    :=
    \sum_{k=1}^T F_k(y[k])
    -
    \sum_{k=1}^T F_k(y_k^\star).
    \label{eq:dynamic_regret}
\end{equation}
Notice that this term measures the total accumulated loss difference over the horizon: the first sum is the
cumulative cost of the actual distributed trajectory, while the second sum is the
cumulative cost of the oracle trajectory. If dynamic regret is sublinear in $T$, the
average regret per step vanishes, meaning the distributed policy asymptotically
performs as well as the centralized oracle.
Moreover, the path variation of the oracle sequence is
\begin{equation}
    P_T
    :=
    \sum_{k=1}^{T-1}\|y_{k+1}^\star-y_k^\star\|,
    \label{eq:path_variation}
\end{equation}
and quantifies how much the oracle configuration moves from one step to the next.
If the oracle is stationary, i.e., $P_T$ is bounded, tracking is easier; if the oracle drifts significantly,
the benchmark itself becomes harder to follow, and a larger regret is unavoidable even
for a perfect tracker.
We can also define the cumulative containment violation as
\begin{equation}
    V_T^\Psi
    :=
    \sum_{k=1}^T\Psi_k(y[k]), 
    \label{eq:cumulative_violation}
\end{equation}
where each term $\Psi_k(y[k])$ is zero when the
target is robustly contained, and positive when the defenders fail to prevent escape despite
their best effort. Sublinearity in $V_T^\Psi$ ensures that constraint violations become rare
on average as the horizon grows.
We can now formally introduce the central problem addressed in this work.
\begin{problem}[Online Distributed Stackelberg Containment]
\label{prob:main}
Given a horizon $T$, admissible regions $\{\Omega_k\}_{k=0}^T$, a
destination $x_B$, and a communication graph sequence satisfying
Assumption \ref{ass:graph}, design causal distributed defender policies
under the information model of Assumption \ref{ass:target_measurement}.
The policies generate feasible induced configurations
$y[k]\in\Y_k$ and are evaluated by the dynamic regret
$\operatorname{Reg}_T^{\operatorname{dyn}}$ in
\eqref{eq:dynamic_regret} and the cumulative violation $V_T^\Psi$ in
\eqref{eq:cumulative_violation}. A policy achieves sublinear online
Stackelberg containment on a class of problem sequences if
\begin{equation}
    \operatorname{Reg}_T^{\operatorname{dyn}}=o(T),
    \qquad
    V_T^\Psi=o(T).
\end{equation}
\end{problem}

Notably, this problem formulation couples two objectives, i.e., achieving sublinear dynamic regret (optimization efficiency)
while maintaining sublinear cumulative violations (physical safety). This dual requirement is non-trivial,
as low regret alone does not guarantee constraint satisfaction if the benchmark oracle itself violates the
admissible region. The next subsections develop a distributed algorithm that balances these competing goals.

\subsection{Tracking the Aggregate Field}
To achieve distributed coordination without centralized gradient computation, each defender
must estimate and track the aggregate field $g_k(y)$ defined in Eq.~\eqref{eq:aggregate_field},
using only local information and peer-to-peer communication. This subsection develops the
consensus-based mechanism that enables this distributed tracking.
Let us first consider that the aggregate field $g_k(y)=\sum_{i\in\V}\rho_{ti}(x_t[k],y_i)$ is a sum
of local contributions. Then, we can define the contribution of defendeer $i$ as
\begin{equation}
    r_i[k]:=\rho_{ti}(x_t[k],y_i[k]).
    \label{eq:local_contribution}
\end{equation}
that expresses the repulsive effect of defender $i$ on the target at time $k$.
This is the fundamental building block, since the aggregate field is the sum of all such
pairwise contributions, each defender only needs to compute its own local term.
Moreover, each defender maintains an estimate $s_i[k]\in\R^d$ of the average of
these contributions over the defender network, namely
$(1/n)\sum_{j=1}^n r_j[k]$. The dynamic average-consensus
recursion is
\begin{equation}
    s_i[k+1]
    =
    \sum_{j=1}^n w_{ij}[k]s_j[k]
    +
    r_i[k+1]-r_i[k],
    \label{eq:dynamic_average_consensus}
\end{equation}
through which agents diffuse information across the network. In particular, each defender mixes estimates from
neighbors and adds the time-varying innovation $r_i[k+1]-r_i[k]$. The doubly
stochastic weights $w_{ij}[k]$ ensure that consensus is achieved while respecting the
communication graph topology.
To this aim, the initialization is consistent if the average of the estimator variables
matches the average of the initial local contributions:
\begin{equation}
    \frac{1}{n}\sum_{i=1}^n s_i[0]
    =
    \frac{1}{n}\sum_{i=1}^n r_i[0].
    \label{eq:consistent_initialization}
\end{equation}
A consistent initialization ensures that the consensus dynamics start with the correct average;
otherwise, all subsequent estimates would be biased. The simple choice $s_i[0]=r_i[0]$ satisfies
this condition and is used in practice.
Since
$W[k]$ is doubly stochastic, \eqref{eq:dynamic_average_consensus}
preserves the correct arithmetic average over time:
\begin{equation}
    \frac{1}{n}\sum_{i=1}^n s_i[k]
    =
    \frac{1}{n}\sum_{i=1}^n r_i[k].
    \label{eq:average_preservation}
\end{equation}
Notice that this averaging property is the key invariant. By construction of the doubly stochastic weights,
the collective average of all estimators remains equal to the true aggregate average at all times,
independent of local estimate deviations. Consequently, when consensus disagreement among
$s_i[k]$ is sufficiently small, each defender's local estimate $ns_i[k]$ accurately approximates
the entire aggregate field $g_k(y[k])$.
The practical implementation is clarified in the following remark.
\begin{remark}[Network Size Knowledge]
The implementation in \eqref{eq:local_aggregate_estimate} requires each
defender to know the network size $n$: the consensus state $s_i[k]$
tracks an average, and the factor $n$ converts that average into the
aggregate field. If $n$ is not known a priori, it can be estimated by a
distributed network-size estimation routine before running the controller, or an equivalent sum-tracking recursion can be used whose local states estimate $g_k(y[k])$ directly. This design choice does not affect the main regret bounds.
\end{remark}
In practice, the consensus estimates $s_i[k]$ do not instantaneously track the time-varying
true average $(1/n)\sum_j r_j[k]$ due to two effects: (i) time variation in the target position
and defender configuration, which change $r_i[k]$, and (ii) finite convergence rate of the consensus
algorithm. These tracking errors are quantified formally below and propagated through the regret
analysis.
To state the tracking property, let us define the estimator disagreement as
\begin{equation}
    D_s[k]
    :=
    \left(
        \sum_{i=1}^n
        \|s_i[k]-\bar s[k]\|^2
    \right)^{1/2},
    \label{eq:estimator_disagreement}
\end{equation}
which measures how far individual estimates deviate from their collective average, by considering
\[
    \bar s[k]:=\frac{1}{n}\sum_{i=1}^n s_i[k].
\]
For $k\ge 1$, let us also define the contribution variation disagreement as
\begin{equation}
    D_r[k]
    :=
    \left(
        \sum_{i=1}^n
        \|\Delta r_i[k]-\overline{\Delta r}[k]\|^2
    \right)^{1/2},
    \label{eq:contribution_variation_disagreement}
\end{equation}
where
\[
    \Delta r_i[k]:=r_i[k]-r_i[k-1],
    \qquad
    \overline{\Delta r}[k]
    :=
    \frac{1}{n}\sum_{i=1}^n\Delta r_i[k].
\]
Thus, $D_s[k]$ measures the disagreement among the local estimates, while
$D_r[k]$ measures how nonuniformly the local field contributions change.
For a horizon $T$, let us also define the cumulative field-variation budget as
\begin{equation}
    V_T^r:=
    \sum_{\tau=1}^T
    \left(
        \sum_{i=1}^n
        \|r_i[\tau]-r_i[\tau-1]\|^2
    \right)^{1/2}.
    \label{eq:field_variation}
\end{equation}
This quantity captures the total path length of the aggregate field trajectory, as it accumulates
how rapidly each defender's local contribution changes over the horizon. Notice that a slowly-varying field
$V_T^r$ is easier to track and induces smaller consensus errors, whereas rapid field
changes require faster consensus convergence to maintain accuracy.

We are now in position to state the following result that quantifies how fast the consensus disagreement decays despite
time variation in the field.
\begin{theorem}
\label{prop:field_tracking}
Under Assumption \ref{ass:graph}, the dynamic average-consensus recursion
in Eq.~\eqref{eq:dynamic_average_consensus} satisfies the following bound for
every $k\ge 1$
\begin{align}
    D_s[k]
    &\le
    \theta^k
    D_s[0]+
    \sum_{\tau=1}^k
    \theta^{k-\tau}
    D_r[\tau],
    \label{eq:tracking_bound}
\end{align}
where $\theta$ is the mixing contraction constant in
Eq.~\eqref{eq:mixing_contraction}.
Consequently, for every horizon $T$,
\begin{equation}
    \sum_{k=1}^T
    D_s[k]
    \le
    \frac{
        D_s[0]+V_T^r
    }{1-\theta}.
    \label{eq:cumulative_tracking_bound}
\end{equation}
\end{theorem}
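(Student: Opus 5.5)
We work with the stacked estimator $S[k]:=\col(s_1[k],\ldots,s_n[k])\in\R^{nd}$ and the orthogonal projector onto the disagreement subspace, $J:=(I_n-\frac1n\mathbf 1\mathbf 1^\top)\otimes I_d$. With this notation $D_s[k]=\|JS[k]\|$. Setting $\Delta R[k]:=\col(\Delta r_1[k],\ldots,\Delta r_n[k])$, we likewise have $D_r[k]=\|J\Delta R[k]\|$. The recursion \eqref{eq:dynamic_average_consensus} in stacked form is
\[
S[k+1]=(W[k]\otimes I_d)S[k]+\Delta R[k+1].
\]

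\textbf{One-step contraction.} We apply $J$ to the stacked recursion. By double stochasticity \eqref{eq:doubly_stochastic}, $W[k]$ commutes with $\frac1n\mathbf 1\mathbf 1^\top$. Hence
\[
J(W[k]\otimes I_d)=\bigl((W[k]-\tfrac1n\mathbf 1\mathbf 1^\top)\otimes I_d\bigr)J .
\]
This identity uses $(W-\frac1n\mathbf1\mathbf1^\top)\frac1n\mathbf1\mathbf1^\top=0$ together with $W(I-\frac1n\mathbf1\mathbf1^\top)=W-\frac1n\mathbf1\mathbf1^\top$. Since the spectral norm of a Kronecker product with $I_d$ is unchanged, \eqref{eq:mixing_contraction} gives
\[
\|JS[k+1]\|\le\theta\|JS[k]\|+\|J\Delta R[k+1]\|,
\]
that is, $D_s[k+1]\le\theta D_s[k]+D_r[k+1]$. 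Unrolling this linear inequality by induction on $k$ yields \eqref{eq:tracking_bound} directly. The inequality is indexed consistently with the paper's convention $\Delta r_i[k]=r_i[k]-r_i[k-1]$, so the innovation at step $\tau$ is exactly $D_r[\tau]$. As a side remark, the same double-stochasticity argument also re-derives the average invariant \eqref{eq:average_preservation}, although the proof does not need it.

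\textbf{Cumulative bound.} We sum \eqref{eq:tracking_bound} over $k=1,\ldots,T$ and swap the order of the double sum:
\[
\sum_{k=1}^T D_s[k]\le D_s[0]\sum_{k\ge1}\theta^k+\sum_{\tau=1}^T D_r[\tau]\sum_{k=\tau}^T\theta^{k-\tau}\le\frac{D_s[0]}{1-\theta}+\frac{1}{1-\theta}\sum_{\tau=1}^T D_r[\tau].
\]
In the first geometric sum we may even use $\theta/(1-\theta)\le1/(1-\theta)$. It remains to compare $D_r[\tau]$ with the summand of $V_T^r$ in \eqref{eq:field_variation}. Since $J$ is an orthogonal projector, it is nonexpansive, so $D_r[\tau]=\|J\Delta R[\tau]\|\le\|\Delta R[\tau]\|=(\sum_i\|r_i[\tau]-r_i[\tau-1]\|^2)^{1/2}$. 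Summing over $\tau$ gives \eqref{eq:cumulative_tracking_bound}.

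\textbf{Main obstacle.} The only delicate point is the commuting/projection identity that converts the mixing bound \eqref{eq:mixing_contraction}, stated for $W[k]-\frac1n\mathbf1\mathbf1^\top$, into a contraction of the disagreement component. Two things must be checked here: double stochasticity is used on both sides, and the Kronecker lift to dimension $d$ preserves the norm. Everything else is a routine discrete Grönwall unrolling followed by a geometric series.
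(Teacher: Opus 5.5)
Your proof is correct and follows the paper's argument essentially step for step. You project the recursion onto the disagreement subspace via the identity $\Pi W[k]=(W[k]-\frac{1}{n}\mathbf{1}\mathbf{1}^\top)\Pi$, obtain $D_s[k+1]\le\theta D_s[k]+D_r[k+1]$, unroll it, exchange the double sum, bound the geometric series, and use nonexpansiveness of the projector to pass to $V_T^r$; the only difference is notation, since your $J$ is the paper's $\Pi\otimes I_d$.
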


\begin{proof}
To prove our statement, we need to track how consensus disagreement evolves over time using a Lyapunov-type approach. 
Let us define first the network-averaging and centering projections $J:=(1/n)\mathbf{1}\mathbf{1}^\top$ and $\Pi:=I_n-J$,
and the stacked vectors $r[k]:=\col(r_1[k],\ldots,r_n[k])$ and $s[k]:=\col(s_1[k],\ldots,s_n[k])$.
The projections $\widetilde s[k]:=(\Pi\otimes I_d)s[k]$ and $\Delta r[k]:=r[k]-r[k-1]$ extract deviations
from the network average: the blocks of $\widetilde s[k]$ are $s_i[k]-\bar s[k]$, and the blocks of
$(\Pi\otimes I_d)\Delta r[k]$ are $\Delta r_i[k]-\overline{\Delta r}[k]$.
By definition of $D_s[k]$ and $D_r[k]$, we have $\|\widetilde s[k]\|_2=D_s[k]$ and
$\|(\Pi\otimes I_d)\Delta r[k]\|_2=D_r[k]$.

Since $W[k]$ is doubly stochastic, we have $J W[k] = J$, which implies $\Pi W[k] = (W[k]-J)\Pi$.
Applying the projection $\Pi\otimes I_d$ to the consensus recursion in Eq.~\eqref{eq:dynamic_average_consensus} we have
\begin{equation}
    \widetilde s[k+1]
    =
    ((W[k]-J)\otimes I_d)\widetilde s[k]
    +
    (\Pi\otimes I_d)(r[k+1]-r[k]).
\end{equation}
Taking the $\ell_2$ norm and applying the triangle inequality, together with the mixing contraction
bound in Eq.~\eqref{eq:mixing_contraction}, we obtain the one-step estimate
\begin{equation}
    D_s[k+1]
    \le
    \theta D_s[k]+D_r[k+1].
    \label{eq:one_step_tracking}
\end{equation}
Applying this inequality recursively from $k=0$ to $k-1$ yields
\begin{equation}
    D_s[k]
    \le
    \theta^kD_s[0]
    +
    \sum_{\tau=1}^k\theta^{k-\tau}D_r[\tau],
\end{equation}
which corresponds to Eq.~\eqref{eq:tracking_bound}. To obtain the cumulative bound, we sum this from $k=1$ to $T$, then we have
\begin{equation*}
    \sum_{k=1}^T D_s[k]
    \le
    D_s[0]\sum_{k=1}^T\theta^k
    +
    \sum_{k=1}^T\sum_{\tau=1}^k\theta^{k-\tau}D_r[\tau].
\end{equation*}
By exchanging the order of summation in the double sum it holds
\begin{equation*}
    \sum_{k=1}^T D_s[k]
    \le
    D_s[0]\sum_{k=1}^T\theta^k
    +
    \sum_{\tau=1}^T D_r[\tau]\sum_{k=\tau}^T\theta^{k-\tau}.
\end{equation*}
Since $\theta \in [0,1)$, the geometric series bounds give $\sum_{k=1}^T\theta^k < \frac{1}{1-\theta}$
and $\sum_{k=\tau}^T\theta^{k-\tau} \le \frac{1}{1-\theta}$. Therefore,
\begin{equation*}
    \sum_{k=1}^T D_s[k]
    \le
    \frac{D_s[0]}{1-\theta}
    +
    \frac{1}{1-\theta}
    \sum_{\tau=1}^T D_r[\tau].
\end{equation*}
Finally, since $D_r[\tau] \le \|\Delta r[\tau]\|$ (projection reduces norm), we have
$\sum_{\tau=1}^T D_r[\tau] \le V_T^r$ by definition of Eq.~\eqref{eq:field_variation}.
This concludes our proof.
\end{proof}
Notice that this result establishes that cumulative consensus disagreement grows sublinearly in the horizon
if the field itself is slowly-varying. In particular, the bound decays exponentially with the network mixing rate
$\theta$. The dependency on $D_s[0]$ (initial disagreement) is also exponentially
suppressed, so even a poorly initialized consensus can recover. Crucially, the bound is
constructive, in that it directly bounds the tracking error in terms of quantities that will
be controlled in the regret analysis. In particular, $V_T^r$ arises naturally as the total variation
of the aggregate field, and both terms in the cumulative bound scale as $O(D_s[0] + V_T^r)/(1-\theta)$,
which will be propagated into sublinear regret bounds provided $V_T^r = o(T)$.

\subsection{Active-Direction Distributed Update}

This subsection turns the centralized objective into a distributed online
update that each defender can execute locally. The main issue is that
$F_k$ is centralized, as $c_k(y)$ depends on the full
aggregate field and includes a maximization over all directions. The
proposed recursion addresses these two points separately: a dynamic consensus algorithm
is exploited to estimate the aggregate field, and an active-direction mechanism
approximates the directional maximization.
To this aim, let us define the active-direction set as
\begin{equation}
    \mathcal D_k(y)
    :=
    \argmax_{\|v\|=1}\ell_k(y;v).
    \label{eq:active_direction_set}
\end{equation}
In particular, when $\nu_k\in\mathcal D_k(y[k])$, the vector $\nu_k$ selects the
supporting halfspace of $\Omega_{k+1}$ corresponding to the largest current
containment margin. In this sense, $\nu_k$ is the most critical direction
for one-step robustness at $y[k]$. Therefore, a step based on this single
direction is a valid subgradient step for the max term $c_k(y)$.
A defender $i$ forms first the aggregate-field estimate as
\begin{equation}
    \widehat g_i[k]:=n s_i[k],
    \label{eq:local_aggregate_estimate}
\end{equation}
and then, using this estimate, it evaluates the local directional margin
\begin{equation}
    \widehat\ell_{i,k}(v)
    :=
    (x_t[k]+\widehat g_i[k])^\top v
    +
    \supp_{\U_a}(v)
    -
    \supp_{\Omega_{k+1}}(v).
    \label{eq:local_directional_margin}
\end{equation}
In the last equation, the support-function terms are
common mission information. The only network-dependent quantity is thus
$\widehat g_i[k]$, i.e., the locally reconstructed aggregate field.

In order to obtain a common $\nu_k$, the defenders run a short consensus routine on a
finite direction dictionary. Specifically, let
\begin{equation}
    \mathcal S_\delta:=\{v^1,\ldots,v^M\}\subset\{v\in\R^d:\|v\|=1\}
\end{equation}
be a shared $\delta$-net of the unit sphere, meaning that for every unit
vector $v$ there exists some $v^m\in\mathcal S_\delta$ with
$\|v-v^m\|\le\delta$. Defender $i$ initializes the score vector
$a_i^0[k]\in\R^M$ by
\begin{equation}
    [a_i^0[k]]_m:=\widehat\ell_{i,k}(v^m),
    \qquad m=1,\ldots,M,
    \label{eq:direction_score_initialization}
\end{equation}
and then performs $L_\nu$ communication rounds
\begin{equation}
    a_i^{\ell+1}[k]
    =
    \sum_{j=1}^n w_{ij}^{\nu}[\ell,k]a_j^\ell[k],
    \qquad
    \ell=0,\ldots,L_\nu-1,
    \label{eq:direction_score_consensus}
\end{equation}
where $w_{ij}^{\nu}[\ell,k]$ respects the communication graph used during
the inner agreement routine. Let us define the average score vector as
\begin{equation}
    \bar a[k]:=\frac{1}{n}\sum_{i=1}^n a_i^0[k].
\end{equation}
If the field tracker is consistently initialized, then
$[\bar a[k]]_m=\ell_k(y[k];v^m)$ for each $m$. Hence, averaging the score
vectors gives the directional scores that a centralized controller would
compute on the dictionary.
After the inner rounds, each defender selects
\begin{equation}
    m_i[k]\in\argmax_{m\in\{1,\ldots,M\}}[a_i^{L_\nu}[k]]_m
    \label{eq:direction_index_selection}
\end{equation}
with a deterministic tie-breaking rule. A final max-consensus on the pairs
\[
    \left([a_i^{L_\nu}[k]]_{m_i[k]},-m_i[k]\right)
\]
enforces a common index after at most the graph diameter communication
rounds on a fixed connected graph, or over an interval whose union graph is
connected in the time-varying case. In particular, the second component selects the
smallest index among equal scores. The defenders then set
\begin{equation}
    \nu_k:=v^{m_k}.
\end{equation}
With finite score-consensus rounds, approximation and communication effects
appear in the active-direction component of $e_k$. In particular, if
$\ell_k(y[k];v)$ is $L_v$-Lipschitz in $v$ and
\begin{equation}
    \epsilon_{\nu,k}
    :=
    \max_i
    \|a_i^{L_\nu}[k]-\bar a[k]\|_\infty,
\end{equation}
then the selected direction satisfies the representative estimate
\begin{equation}
    c_k(y[k])-\ell_k(y[k];\nu_k)
    \le
    L_v\delta+2\epsilon_{\nu,k},
    \label{eq:direction_approximation_error}
\end{equation}
whenever the defenders agree on $\nu_k$. This makes explicit the two
relevant error sources, i.e., the dictionary resolution ($\delta$) and residual
consensus disagreement ($\epsilon_{\nu,k}$).

Notice that the ideal case used in the regret analysis is recovered when the agreed
direction belongs to $\mathcal D_k(y[k])$. The previous construction
quantifies the gap between this ideal active direction and the one produced
with finite communication.
When lower computational overhead is preferred, a cheaper geometric rule is
also possible. To this aim, let us define the local predicted target position as
\begin{equation}
    \widehat p_i[k]:=x_t[k]+\widehat g_i[k].
    \label{eq:local_predicted_target}
\end{equation}
If $\widehat p_i[k]\notin\Omega_{k+1}$, the direction
\begin{equation}
    \nu_k
    =
    \frac{
        \widehat p_i[k]-\proj_{\Omega_{k+1}}(\widehat p_i[k])
    }{
        \|\widehat p_i[k]-\proj_{\Omega_{k+1}}(\widehat p_i[k])\|
    }
    \label{eq:active_direction_projection}
\end{equation}
points from the closest admissible point toward the predicted violation.
This rule is computationally attractive because it avoids dictionary
scoring, but oracle-level regret guarantees require its cumulative
approximation error to be sublinear, as formalized in Section~\ref{sec:regret}.
Therefore, given the common direction $\nu_k$, defender $i$ computes
\begin{align}
    \widehat c_{i,k}
    &:=
    \widehat\ell_{i,k}(\nu_k),
    \label{eq:local_margin_estimate}\\
    \widehat \tau_{i,k}
    &:=
    m_B-\widehat g_i[k]^\top d_B[k].
    \label{eq:local_transport_deficit}
\end{align}
Hence, $\widehat c_{i,k}$ estimates the active containment margin, whereas
$\widehat \tau_{i,k}$ estimates the transport deficit. Let
\begin{equation}
    A_i[k]
    :=
    \nabla_{y_i}\rho_{ti}(x_t[k],y_i[k])
    \in\R^{d\times d}
    \label{eq:local_jacobian}
\end{equation}
be the local Jacobian of defender $i$'s contribution to the target field.
If $\nu_k\in\mathcal D_k(y[k])$ and exact aggregate information is
available, the $i$-th block of a subgradient of
\eqref{eq:reduced_loss_support} at $y[k]$ is
\begin{align}
    \xi_{i,k}
    &=
    \nabla_{y_i}R_k(y[k])
    -
    2\beta[\tau_k(y[k])]_+ A_i[k]^\top d_B[k]
    \notag\\
    &\quad+
    2\alpha[c_k(y[k])]_+ A_i[k]^\top \nu_k,
    \label{eq:centralized_active_subgradient}
\end{align}
where
\begin{equation}
    \tau_k(y):=m_B-g_k(y)^\top d_B[k].
\end{equation}
This follows from the max representation of $c_k$ and Danskin's theorem.
At this point, the distributed update replaces the centralized quantities in
Eq.~\eqref{eq:centralized_active_subgradient} with local estimates.
Let $R_{i,k}$ denote defender $i$'s local contribution to the penalty
$R_k$, built from its effort term and the separation penalties computable
from local measurements and neighbor messages. The local weight of the
containment term is
\begin{equation}
    \gamma_i[k]:=2\alpha[\widehat c_{i,k}]_+ .
    \label{eq:local_containment_weight}
\end{equation}
A concrete distributed descent direction is then
\begin{align}
    d_i[k]
    &:=
    \nabla R_{i,k}(y_i[k])
    -
    2\beta[\widehat \tau_{i,k}]_+ A_i[k]^\top d_B[k]
    \notag\\
    &\quad+
    \gamma_i[k] A_i[k]^\top \nu_k.
    \label{eq:concrete_direction}
\end{align}
In the active-direction regime and with exact aggregate-field estimates,
Eq.~\eqref{eq:concrete_direction} is exactly the $i$-th block of a subgradient
of the centralized loss in Eq.~\eqref{eq:reduced_loss_support}. With local
estimates, it becomes a perturbed subgradient.

Finally, the induced-configuration update is
\begin{equation}
    y_i[k+1]
    =
    \proj_{\Y_{i,k}}
    \left[
        y_i[k]-\eta_y d_i[k]
    \right].
    \label{eq:distributed_primal}
\end{equation}
In order to be able to implement this induced-position step physically, defender $i$ must map
the selected induced position to an executable command. Let us define the known
offset as
\begin{equation}
    b_i[k]
    :=
    x_i[k]+h_i(x[k])+q_i(x_i[k],x_t[k]).
    \label{eq:command_offset}
\end{equation}
Then Eq.~\eqref{eq:defender_induced_position} gives
$[f_{D,k}(u)]_i=b_i[k]+u_i$. Hence, if the projected point
$y_i[k+1]\in\Y_{i,k}$ is attainable, defender $i$ applies
\begin{equation}
    u_i[k]=y_i[k+1]-b_i[k].
    \label{eq:exact_command_recovery}
\end{equation}
If numerical errors, model mismatch, or actuator constraints make exact
attainment impossible, defender $i$ can exploit the least-squares command
\begin{equation}
    u_i[k]
    =
    \proj_{\U_{i,k}}
    \left(y_i[k+1]-b_i[k]\right),
    \label{eq:approx_command_recovery}
\end{equation}
which equivalently solves
\begin{equation}
    \min_{v_i\in\U_{i,k}}
    \|b_i[k]+v_i-y_i[k+1]\|^2 .
    \label{eq:command_recovery_least_squares}
\end{equation}
Notice that the resulting residual is an actuation error and can be absorbed into the
distributed error term used in the regret analysis.

To conclude the section, let us clarify the timing convention of the distributed recursion. At time step $k$, defender $i$ uses
$y_i[k]$, $s_i[k]$, the current target measurement, the selected direction
$\nu_k$, and neighbor messages to compute $d_i[k]$ and $y_i[k+1]$. It then
recovers and applies $u_i[k]$ through
Eq.~\eqref{eq:exact_command_recovery} or
Eq.~\eqref{eq:approx_command_recovery}. After the move, defender $i$ evaluates
$r_i[k+1]$ and updates $s_i[k+1]$ through
Eq.~\eqref{eq:dynamic_average_consensus}. By collecting these steps, we define the
distributed recursion as an explicit online implementation.

\section{Regret Analysis}
\label{sec:regret}

This section quantifies the performance of the distributed recursion in tracking the
centralized Stackelberg benchmark over time. The recursion introduced in
Section~\ref{sec:distributed} generates a feasible induced-configuration
sequence $y[1],y[2],\ldots$, which we compare with the oracle sequence
$y_1^\star,y_2^\star,\ldots$ through the dynamic regret
in Eq.~\eqref{eq:dynamic_regret}. The goal is to identify conditions under which
the average regret vanishes and to make explicit how network and
active-direction inaccuracies contribute to the final bound.
In particular, the analysis exploits an online projected-gradient view of the distributed
controller. At time $k$, the ideal active-direction step would use a
centralized subgradient $\xi_k\in\partial F_k(y[k])$ generated by an active
direction $\nu_k\in\mathcal D_k(y[k])$. The actual distributed update uses
local estimates of the aggregate field, a direction obtained by the
score-consensus routine, and a recovered physical command. The difference
between the ideal subgradient and the aggregate direction implemented by
the defenders is collected in an error term $e_k$.

Some temporal regularity of the benchmark is necessary to ensure tracking performance. In particular, if the optimizer
$y_k^\star$ can move an order-one distance at every time step, no causal
online method can track it with vanishing average loss. The path variation
$P_T$ in Eq.~\eqref{eq:path_variation} measures this motion and appears
explicitly in the regret bound.
To formalize this projected-gradient view, we collect the required
regularity and error assumptions in the next statement.
\begin{assumption}
\label{ass:regret}
There exists a nonempty compact convex set $\Y$ of diameter $D$ such that
$\Y_k\subseteq \Y$ for all $k$, where
\begin{equation}
    D:=\sup\{\|y-z\|:y,z\in\Y\}.
    \label{eq:diameter_Y}
\end{equation}
For each $k$, the loss $F_k$ in
Eq.~\eqref{eq:reduced_loss_support} is convex on $\Y$. Let
$\xi_k\in\partial F_k(y[k])$ denote a centralized subgradient obtained from
an active direction $\nu_k\in\mathcal D_k(y[k])$. The aggregate
effect of the block update in Eq.~\eqref{eq:distributed_primal} can be written,
or upper-bounded for analysis, as
\begin{equation}
    y[k+1]
    =
    \proj_{\Y}
    \left[
        y[k]-\eta(\xi_k+e_k)
    \right],
    \label{eq:abstract_projected_update}
\end{equation}
where $\|\xi_k+e_k\|\le G_F$. Moreover, the aggregate gradient, field-tracking,
active-direction, and command-recovery errors satisfy
\begin{equation}
    E_T:=\sum_{k=1}^T \|e_k\|<\infty,
    \label{eq:error_budget}
\end{equation}
where $e_k$ is the discrepancy between the ideal centralized subgradient
$\xi_k$ and the aggregate direction actually produced by the defenders.
\end{assumption}
Assumption~\ref{ass:regret} connects the concrete recursion to the regret
discussion. In particular, if the direction-agreement routine returns a common
$\nu_k\in\mathcal D_k(y[k])$ and the aggregate field is known exactly,
Eq.~\eqref{eq:concrete_direction} gives the corresponding block of the
centralized subgradient. The distributed implementation perturbs this
ideal direction through three mechanisms:
\begin{equation}
    e_k
    =
    e_k^{\mathrm{net}}
    +
    e_k^{\mathrm{dir}}
    +
    e_k^{\mathrm{loc}}.
    \label{eq:error_decomposition}
\end{equation}
Here $e_k^{\mathrm{net}}$ is caused by imperfect tracking of the aggregate
field, $e_k^{\mathrm{dir}}$ by approximate agreement on the active
direction, and $e_k^{\mathrm{loc}}$ by local modeling, linearization, and
command-recovery errors.
If the local gradients are $L_s$-Lipschitz with respect to the
aggregate-field estimate, then
\begin{equation}
    \|e_k^{\mathrm{net}}\|
    \le
    L_s D_s[k].
    \label{eq:network_error_bound}
\end{equation}
For the score-consensus direction routine, let us define
\begin{equation}
    E_T^\nu
    :=
    L_{\mathrm{dir}}
    \sum_{k=1}^T
    \left(
        L_v\delta+2\epsilon_{\nu,k}
    \right),
    \label{eq:direction_error_budget}
\end{equation}
where $L_{\mathrm{dir}}$ is a sensitivity constant mapping the
directional-margin error in Eq.~\eqref{eq:direction_approximation_error} into
subgradient error. Under exact active-direction selection, we have $E_T^\nu=0$.
Combining Eq.~\eqref{eq:network_error_bound} with Theorem~\ref{prop:field_tracking} gives the explicit sufficient condition
\begin{equation}
    E_T
    \le
    E_T^{\mathrm{loc}}
    +
    E_T^\nu
    +
    \frac{L_s}{1-\theta}
    \left(
        D_s[0]+V_T^r
    \right),
    \label{eq:explicit_error_budget}
\end{equation}
where $E_T^{\mathrm{loc}}:=\sum_{k=1}^T\|e_k^{\mathrm{loc}}\|$. Hence, the
core message is that oracle tracking remains sublinear when all four
contributors are sublinear, i.e., oracle path variation, local implementation
error, direction-agreement error, and field-variation budget.
The scalar $\eta$ in Eq.~\eqref{eq:abstract_projected_update} is the primal
step size of the aggregate projected-gradient model. In the concrete
distributed recursion, it corresponds to the common primal gain
$\eta_y$ in \eqref{eq:distributed_primal}, after collecting the block
updates into the stacked vector $y[k]$. Notice that larger values make the defenders
react more aggressively to the current active-direction loss, while
smaller values make the motion more conservative. The regret bound below
is valid for every constant $\eta>0$; the displayed choice balances the
tracking term, which scales as $1/\eta$, with the accumulated gradient
term which scales as $\eta T$.

Let us first establish a one-step inequality for the projected update, which is
an ancillary technical result that will be used in the regret proof.
\begin{lemma}
\label{lem:projection}
Under Assumption~\ref{ass:regret}, for any $z\in\Y$ and every $k$, it holds
\begin{align}
    \langle \xi_k,y[k]-z\rangle
    &\le
    \frac{\|y[k]-z\|^2}{2\eta}
    -
    \frac{\|y[k+1]-z\|^2}{2\eta}
    \notag\\
    &\quad
    +
    \frac{\eta G_F^2}{2}
    +
    D\|e_k\|.
    \label{eq:projection_lemma}
\end{align}
\end{lemma}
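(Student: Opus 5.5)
The argument is the standard projected online-gradient one-step estimate, adapted to the perturbed direction $g_k:=\xi_k+e_k$ appearing in \eqref{eq:abstract_projected_update}.

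\emph{Step 1 (nonexpansiveness).} Fix $z\in\Y$. Since $\Y$ is closed and convex and $z=\proj_{\Y}(z)$, the projection is nonexpansive, so
\begin{equation*}
\|y[k+1]-z\|^2\le\|y[k]-\eta(\xi_k+e_k)-z\|^2 .
\end{equation*}

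\emph{Step 2 (expand and rearrange).} Expanding the right-hand side gives
\begin{equation*}
\|y[k+1]-z\|^2\le\|y[k]-z\|^2-2\eta\langle\xi_k+e_k,y[k]-z\rangle+\eta^2\|\xi_k+e_k\|^2 .
\end{equation*}
We divide by $2\eta$ and isolate $\langle\xi_k,y[k]-z\rangle$. The bound $\|\xi_k+e_k\|\le G_F$ from Assumption~\ref{ass:regret} then yields
\begin{equation*}
\langle\xi_k,y[k]-z\rangle\le\frac{\|y[k]-z\|^2-\|y[k+1]-z\|^2}{2\eta}+\frac{\eta G_F^2}{2}-\langle e_k,y[k]-z\rangle .
\end{equation*}

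\emph{Step 3 (error term).} Cauchy--Schwarz gives $-\langle e_k,y[k]-z\rangle\le\|e_k\|\,\|y[k]-z\|$. Because $y[k],z\in\Y$, we have $\|y[k]-z\|\le D$ by \eqref{eq:diameter_Y}. This is exactly \eqref{eq:projection_lemma}.

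\emph{Main obstacle.} The only delicate point is justifying $y[k]\in\Y$, so that the diameter bound applies. For $k\ge2$ this holds because $y[k]$ is itself a projection onto $\Y$ by \eqref{eq:abstract_projected_update}. For the initial iterate we use $y[1]\in\Y_1\subseteq\Y$, which follows from the feasibility requirement $y[k]\in\Y_k$ in Problem~\ref{prob:main}.

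Convexity of $F_k$ is not needed here. It enters only later, when $\langle\xi_k,y[k]-z\rangle$ is lower bounded by $F_k(y[k])-F_k(z)$.
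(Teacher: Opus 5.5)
Your proof is correct and matches the paper's argument step for step: nonexpansiveness of $\proj_{\Y}$ applied to the perturbed direction $\xi_k+e_k$, expansion of the square, the bound $\|\xi_k+e_k\|\le G_F$, and Cauchy--Schwarz with the diameter $D$ on $\langle e_k,y[k]-z\rangle$. Your explicit check that $y[k]\in\Y$, which the paper uses without comment, is a good addition; rename your perturbed direction, though, since $g_k$ already denotes the aggregate field $g_k(y)$.
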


\begin{proof}
Let us define $d_k:=\xi_k+e_k$. By
Eq.~\eqref{eq:abstract_projected_update}, we have
$y[k+1]=\proj_{\Y}(y[k]-\eta d_k)$. Using the nonexpansiveness of Euclidean
projection onto the closed convex set $\Y$, it follows that
\begin{align}
    \|y[k+1]-z\|^2
    &\le
    \|y[k]-\eta d_k-z\|^2
    \notag\\
    &=
    \|y[k]-z\|^2
    \notag\\
    &\quad
    -
    2\eta\langle d_k,y[k]-z\rangle
    +
    \eta^2\|d_k\|^2 ,
\end{align}
and rearranging terms gives
\begin{align}
    \langle d_k,y[k]-z\rangle
    &\le
    \frac{\|y[k]-z\|^2}{2\eta}
    -
    \frac{\|y[k+1]-z\|^2}{2\eta}
    \notag\\
    &\quad
    +
    \frac{\eta\|d_k\|^2}{2}.
\end{align}
Since $\xi_k=d_k-e_k$, we can write
\begin{equation}
    \langle \xi_k,y[k]-z\rangle
    =
    \langle d_k,y[k]-z\rangle
    -
    \langle e_k,y[k]-z\rangle .
\end{equation}
Finally, using $\|d_k\|\le G_F$ from
Assumption~\ref{ass:regret} and $\|y[k]-z\|\le D$ from the diameter of
$\Y$, we obtain Eq.~\eqref{eq:projection_lemma}. This completes our proof.
\end{proof}
We are now in position to prove the main dynamic-regret bound.
\begin{theorem}
\label{thm:dynamic_regret}
Under Assumptions~\ref{ass:graph} and~\ref{ass:regret}, a projected
distributed online update with constant step size $\eta>0$ satisfies
\begin{equation}
    \operatorname{Reg}_T^{\operatorname{dyn}}
    \le
    \frac{D^2+2DP_T}{2\eta}
    +
    \frac{\eta G_F^2T}{2}
    +
    D E_T .
    \label{eq:dynamic_regret_bound}
\end{equation}
If an a priori bound $\bar P_T\ge P_T$ is available, the choice
\begin{equation}
    \eta
    =
    \frac{\sqrt{D^2+2D\bar P_T}}{G_F\sqrt{T}}
    \label{eq:stepsize_choice}
\end{equation}
gives
\begin{equation}
    \operatorname{Reg}_T^{\operatorname{dyn}}
    =
    O\left(\sqrt{T(1+\bar P_T)}+E_T\right).
    \label{eq:dynamic_regret_rate}
\end{equation}
Hence the average regret vanishes whenever $\bar P_T=o(T)$ and
$E_T=o(T)$. If the path-variation budget is not known in advance, the
bound in Eq.~\eqref{eq:dynamic_regret_bound} still holds for any chosen
$\eta>0$; adaptive or doubling choices can be used to estimate the
appropriate scale online.
\end{theorem}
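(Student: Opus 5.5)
The plan is the standard dynamic online projected-subgradient argument, with Lemma~\ref{lem:projection} as the one-step engine. First, by convexity of $F_k$ on $\Y$ (Assumption~\ref{ass:regret}) and $\xi_k\in\partial F_k(y[k])$, we have for each $k$
\[
F_k(y[k])-F_k(y_k^\star)\le\langle\xi_k,y[k]-y_k^\star\rangle .
\]
Here $y_k^\star\in\Y_k\subseteq\Y$, so Lemma~\ref{lem:projection} applies with $z=y_k^\star$. Summing over $k=1,\dots,T$ bounds $\operatorname{Reg}_T^{\operatorname{dyn}}$ by three pieces: $\frac{1}{2\eta}\sum_k\bigl(\|y[k]-y_k^\star\|^2-\|y[k+1]-y_k^\star\|^2\bigr)$, the term $\frac{\eta G_F^2T}{2}$, and the term $D\sum_k\|e_k\|=DE_T$.

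The only nontrivial step is the first sum, which does not telescope because the comparator changes with $k$. I will rewrite it as $\|y[1]-y_1^\star\|^2-\|y[T+1]-y_T^\star\|^2+\sum_{k=1}^{T-1}\bigl(\|y[k+1]-y_{k+1}^\star\|^2-\|y[k+1]-y_k^\star\|^2\bigr)$. Each summand in the last sum factors as $\langle y_k^\star-y_{k+1}^\star,\,2y[k+1]-y_k^\star-y_{k+1}^\star\rangle$. By Cauchy--Schwarz and the triangle inequality, together with all points lying in $\Y$ of diameter $D$, each summand is at most $2D\|y_{k+1}^\star-y_k^\star\|$. Dropping the negative end term and bounding $\|y[1]-y_1^\star\|^2\le D^2$ gives $D^2+2DP_T$, which yields \eqref{eq:dynamic_regret_bound}.

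For the rate, I substitute \eqref{eq:stepsize_choice} and use $P_T\le\bar P_T$. The first two terms then both equal $\tfrac12 G_F\sqrt{T(D^2+2D\bar P_T)}$, so their sum is $G_F\sqrt{T}\sqrt{D^2+2D\bar P_T}=O(\sqrt{T(1+\bar P_T)})$, with $D$ and $G_F$ treated as constants. Adding $DE_T$ gives \eqref{eq:dynamic_regret_rate}. Dividing by $T$ shows the average regret vanishes when $\bar P_T=o(T)$ and $E_T=o(T)$. The remark about unknown $P_T$ follows because \eqref{eq:dynamic_regret_bound} was derived for arbitrary $\eta>0$. Assumption~\ref{ass:graph} enters only implicitly, through the error budget $E_T$ via \eqref{eq:explicit_error_budget}.

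The main obstacle I expect is the non-telescoping comparator shift. It requires care that $y[k+1]\in\Y$, which holds by the projection in \eqref{eq:abstract_projected_update}, and that $y_k^\star\in\Y$, so that the factor-$2D$ bound is valid.
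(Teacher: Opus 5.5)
Your proposal is correct and follows essentially the same route as the paper's proof: the subgradient inequality, Lemma~\ref{lem:projection} with $z=y_k^\star$, the comparator-shift bound $\|y[k+1]-y_{k+1}^\star\|^2-\|y[k+1]-y_k^\star\|^2\le 2D\|y_{k+1}^\star-y_k^\star\|$, dropping the negative end term, and balancing the two $\eta$-dependent terms. Your endpoint bookkeeping is slightly cleaner than the paper's. You keep $-\|y[T+1]-y_T^\star\|^2$ and then drop it, whereas the paper's displayed $-a_{T+1}$ would implicitly need a $k=T$ shift term that is not counted in $P_T$.
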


\begin{proof}
By convexity of $F_k$ and $\xi_k\in\partial F_k(y[k])$, we have the following 
subgradient inequality
\begin{equation}
    F_k(y[k])-F_k(y_k^\star)
    \le
    \langle \xi_k,y[k]-y_k^\star\rangle.
\end{equation}
Applying Lemma~\ref{lem:projection} with $z=y_k^\star$ gives
\begin{align}
    F_k(y[k])-F_k(y_k^\star)
    &\le
    \frac{a_k-b_k}{2\eta}
    +
    \frac{\eta G_F^2}{2}
    +
    D\|e_k\|,
    \label{eq:regret_one_step}
\end{align}
where we define
\[
    a_k:=\|y[k]-y_k^\star\|^2,\qquad
    b_k:=\|y[k+1]-y_k^\star\|^2 .
\]
In order to handle the moving comparator $y_k^\star$, we compare consecutive squared
distances. Because all iterates and comparators lie in $\Y$, whose diameter
is $D$, we obtain the one-step bound as
\begin{align}
    a_{k+1}-b_k
    &=
    \|y[k+1]-y_{k+1}^\star\|^2
    -
    \|y[k+1]-y_k^\star\|^2
    \notag\\
    &\le
    2D\|y_{k+1}^\star-y_k^\star\|.
    \label{eq:moving_comparator_step}
\end{align}
In particular, the last inequality follows from
$|\|p-q\|^2-\|p-r\|^2|
\le \|q-r\|(\|p-q\|+\|p-r\|)$ with
$p=y[k+1]$, $q=y_{k+1}^\star$, and $r=y_k^\star$.
Therefore,
\begin{equation}
    a_k-b_k
    \le
    a_k-a_{k+1}
    +
    2D\|y_{k+1}^\star-y_k^\star\|.
\end{equation}
Summing Eq.~\eqref{eq:regret_one_step} from $k=1$ to $T$ and using the
previous relation yields
\begin{align}
    \operatorname{Reg}_T^{\operatorname{dyn}}
    &\le
    \frac{a_1-a_{T+1}}{2\eta}
    +
    \frac{D}{\eta}
    \sum_{k=1}^{T-1}\|y_{k+1}^\star-y_k^\star\|
    \notag\\
    &\quad+
    \frac{\eta G_F^2T}{2}
    +
    D\sum_{k=1}^T\|e_k\|.
\end{align}
Since $0\le a_1\le D^2$ and $a_{T+1}\ge 0$, we obtain
Eq.~\eqref{eq:dynamic_regret_bound}. If $\bar P_T\ge P_T$ and
$\eta$ is chosen as in Eq.\eqref{eq:stepsize_choice}, then we have
\begin{equation}
    \frac{D^2+2DP_T}{2\eta}
    +
    \frac{\eta G_F^2T}{2}
    \le
    G_F\sqrt{T(D^2+2D\bar P_T)}.
\end{equation}
which gives Eq.~\eqref{eq:dynamic_regret_rate}. This completes our proof.
\end{proof}

The next corollary specializes the theorem by making the network dependence
explicit through the tracking and direction-agreement error budgets.
\begin{corollary}
\label{cor:network_regret}
Let us assume that the hypotheses of Theorem~\ref{thm:dynamic_regret} hold and the
network error satisfies
Eq.~\eqref{eq:network_error_bound}. Then
\begin{align}
    \operatorname{Reg}_T^{\operatorname{dyn}}
    &\le
    \frac{D^2+2DP_T}{2\eta}
    +
    \frac{\eta G_F^2T}{2}
    +
    D E_T^{\mathrm{loc}}
    +
    D E_T^\nu
    \notag\\
    &\quad+
    \frac{D L_s}{1-\theta}
    \left(
        D_s[0]+V_T^r
    \right).
    \label{eq:network_explicit_regret}
\end{align}
Consequently, with $\eta$ chosen as in \eqref{eq:stepsize_choice} for an
available bound $\bar P_T\ge P_T$, the average dynamic regret vanishes if
\begin{align}
    \bar P_T=o(T),\qquad
    E_T^{\mathrm{loc}}=o(T),
    \qquad
    E_T^\nu=o(T),
    \notag\\
    V_T^r=o(T).
\end{align}
\end{corollary}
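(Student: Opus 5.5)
The plan is to take the general bound \eqref{eq:dynamic_regret_bound} of Theorem~\ref{thm:dynamic_regret} as the starting point and make its last term $DE_T$ explicit. Write $E_T=\sum_{k=1}^T\|e_k\|$ and use the decomposition \eqref{eq:error_decomposition}. The triangle inequality then gives
\[
E_T\le \sum_{k=1}^T\|e_k^{\mathrm{net}}\|+\sum_{k=1}^T\|e_k^{\mathrm{dir}}\|+E_T^{\mathrm{loc}}.
\]
For the direction term, I would use the sensitivity constant $L_{\mathrm{dir}}$ together with \eqref{eq:direction_approximation_error}. This gives $\|e_k^{\mathrm{dir}}\|\le L_{\mathrm{dir}}(L_v\delta+2\epsilon_{\nu,k})$, and summing over $k$ yields exactly $E_T^\nu$ from \eqref{eq:direction_error_budget}.

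For the network term, I would invoke the assumed bound \eqref{eq:network_error_bound}, $\|e_k^{\mathrm{net}}\|\le L_sD_s[k]$. Summing over $k=1,\dots,T$ and applying the cumulative tracking bound \eqref{eq:cumulative_tracking_bound} of Theorem~\ref{prop:field_tracking} gives $L_s(D_s[0]+V_T^r)/(1-\theta)$. That theorem needs Assumption~\ref{ass:graph}, which is among the hypotheses. This reproduces \eqref{eq:explicit_error_budget}. Multiplying it by $D$ and substituting into \eqref{eq:dynamic_regret_bound} yields \eqref{eq:network_explicit_regret}.

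For the asymptotic claim, I would choose $\eta$ as in \eqref{eq:stepsize_choice}. As in the proof of Theorem~\ref{thm:dynamic_regret}, the first two terms are then at most
\[
G_F\sqrt{T(D^2+2D\bar P_T)}=O\bigl(\sqrt{T(1+\bar P_T)}\bigr),
\]
which is $o(T)$ whenever $\bar P_T=o(T)$. The remaining terms are $D$ times $E_T^{\mathrm{loc}}$ and $E_T^\nu$, plus a constant multiple of $D_s[0]+V_T^r$. Each is $o(T)$ under the stated conditions, and $D_s[0]$ is a fixed constant. Dividing by $T$ then shows that the average regret vanishes.

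The only delicate point is the treatment of $e_k^{\mathrm{dir}}$. The paper introduces $L_{\mathrm{dir}}$ only informally, as the constant mapping directional-margin error into subgradient error. I would state explicitly that the corollary reads this as the per-step bound $\|e_k^{\mathrm{dir}}\|\le L_{\mathrm{dir}}(L_v\delta+2\epsilon_{\nu,k})$, valid whenever the defenders agree on $\nu_k$. Everything else is bookkeeping.

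A minor check concerns the finiteness requirement \eqref{eq:error_budget} in Assumption~\ref{ass:regret}. It holds automatically for each finite $T$, because every term in the bound is finite. Hence Theorem~\ref{thm:dynamic_regret} applies without any further condition.
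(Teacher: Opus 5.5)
Your proposal is correct and follows the paper's proof step for step. You apply the triangle inequality to the decomposition \eqref{eq:error_decomposition}, then combine the network bound \eqref{eq:network_error_bound} with the cumulative tracking bound \eqref{eq:cumulative_tracking_bound} to obtain \eqref{eq:explicit_error_budget}, substitute it into \eqref{eq:dynamic_regret_bound}, and use the step size \eqref{eq:stepsize_choice} for the sublinear claim. Your explicit per-step reading $\|e_k^{\mathrm{dir}}\|\le L_{\mathrm{dir}}(L_v\delta+2\epsilon_{\nu,k})$ just makes precise the assumption the paper uses implicitly when it bounds $\sum_{k}\|e_k^{\mathrm{dir}}\|$ by $E_T^\nu$.
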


\begin{proof}
Starting from the decomposition in Eq.~\eqref{eq:error_decomposition}, we have
\[
    E_T
    \le
    E_T^{\mathrm{loc}}
    +
    E_T^\nu
    +
    \sum_{k=1}^T \|e_k^{\mathrm{net}}\|.
\]
Using Eq.~\eqref{eq:network_error_bound} and the cumulative tracking bound
in Eq.~\eqref{eq:cumulative_tracking_bound} yields
Eq.~\eqref{eq:explicit_error_budget}. Substituting this estimate for $E_T$
into Eq.~\eqref{eq:dynamic_regret_bound} gives
Eq.~\eqref{eq:network_explicit_regret}. The stated sublinear condition then
follows from Eq.~\eqref{eq:dynamic_regret_rate}.
\end{proof}
Finally, the following corollary translates the regret estimate into a direct bound on
the cumulative containment violation.
\begin{corollary}
\label{cor:violation_from_regret}
Let us assume that $F_k(y)\ge \alpha\Psi_k(y)$ for all $y\in\Y_k$ and all $k$, and let us 
define $B_T^\star:=\sum_{k=1}^T F_k(y_k^\star)$. Then
\begin{equation}
    \frac{V_T^\Psi}{T}
    \le
    \frac{1}{\alpha T}
    \left(
        \operatorname{Reg}_T^{\operatorname{dyn}}
        +
        B_T^\star
    \right).
    \label{eq:violation_from_regret}
\end{equation}
In particular, if $B_T^\star=o(T)$ and
$\operatorname{Reg}_T^{\operatorname{dyn}}=o(T)$, then the average
worst-case containment violation vanishes. The proof is complete.
\end{corollary}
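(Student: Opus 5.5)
The argument is a direct chain of elementary inequalities; no earlier regret machinery is needed beyond the definitions. The plan is to bound each stage violation by the corresponding stage loss, sum over the horizon, and then rewrite the cumulative online loss in terms of the dynamic regret.

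First, the hypothesis $F_k(y)\ge\alpha\Psi_k(y)$ holds on $\Y_k$, and $y[k]\in\Y_k$ since the policy produces feasible induced configurations (Problem~\ref{prob:main}). Applying it at $y=y[k]$ gives $\alpha\Psi_k(y[k])\le F_k(y[k])$ for each $k$. Summing from $k=1$ to $T$ and using $\alpha>0$ yields
\begin{equation*}
    V_T^\Psi=\sum_{k=1}^T\Psi_k(y[k])\le\frac{1}{\alpha}\sum_{k=1}^T F_k(y[k]).
\end{equation*}
If $\Y_k$ were replaced by the enlarged set $\Y$ of Assumption~\ref{ass:regret}, I would simply read the hypothesis on the set where the iterates live.

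Second, I would use the definition of dynamic regret in Eq.~\eqref{eq:dynamic_regret}, which states exactly that
\begin{equation*}
    \sum_{k=1}^T F_k(y[k])=\operatorname{Reg}_T^{\operatorname{dyn}}+B_T^\star.
\end{equation*}
This is an identity, not an estimate. Substituting it into the previous display and dividing by $T$ gives Eq.~\eqref{eq:violation_from_regret}.

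Finally, for the asymptotic claim: if both $\operatorname{Reg}_T^{\operatorname{dyn}}$ and $B_T^\star$ are $o(T)$, the right-hand side of Eq.~\eqref{eq:violation_from_regret} tends to zero. Since each $\Psi_k\ge0$, $V_T^\Psi/T$ lies between $0$ and a vanishing quantity, so the average worst-case violation vanishes. The sufficient conditions for the regret to be $o(T)$ can be taken from Theorem~\ref{thm:dynamic_regret} or Corollary~\ref{cor:network_regret}.

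There is no genuine obstacle. The only point needing care is that the regret may be negative, which is harmless because the identity in the second step is exact. Nonnegativity of $\Psi_k$ is only needed to conclude that the average violation tends to zero rather than merely being bounded above.
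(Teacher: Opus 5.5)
Your proposal is correct and follows the paper's proof: apply the hypothesis at $y[k]$, sum over $k$, substitute the defining identity $\sum_{k=1}^T F_k(y[k]) = \operatorname{Reg}_T^{\operatorname{dyn}} + B_T^\star$, and divide by $\alpha T$. Your explicit remarks on the feasibility of $y[k]$ and on using nonnegativity of $\Psi_k$ for the limiting claim are useful additions that the paper leaves implicit.
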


\begin{proof}
From $F_k(y)\ge\alpha\Psi_k(y)$, summing over $k=1,\ldots,T$ gives
\begin{equation}
    \alpha V_T^\Psi
    \le
    \sum_{k=1}^T F_k(y[k]).
\end{equation}
By the definition of dynamic regret, it holds
\begin{equation}
    \sum_{k=1}^T F_k(y[k])
    =
    \operatorname{Reg}_T^{\operatorname{dyn}}
    +
    \sum_{k=1}^T F_k(y_k^\star).
\end{equation}
Combining the two relations and dividing by $\alpha T$ yields
Eq.~\eqref{eq:violation_from_regret}. The proof is complete.
\end{proof}
\color{black}

\providecommand{\simulationDataPath}{data}

\section{Simulation Study}\label{sec:simulation}

We consider a two-dimensional instance with $n=5$ defenders and one hijacked target. The admissible set $\Omega_k$ is a moving disk of radius 1.15, the adversarial command set is a disk of radius 0.07, and the defenders have maximum displacement 0.20 per time step. The target starts at the center of $\Omega_0$, while the defenders are initialized on a loose support ring around the disk. All simulations use the random seed 11.

The centralized controller solves the one-step oracle problem by projected finite-difference descent. The distributed controllers use dynamic average consensus for the aggregate field. The first distributed controller selects the active direction by score consensus over a finite direction dictionary; the second uses the cheaper projection direction. The regularization includes motion, clearance, and a light coverage term that keeps the defenders distributed around the moving disk unless the containment loss calls for a stronger local response.

\begin{figure}[t]\centering
\begin{tikzpicture}
\begin{axis}[width=0.95\columnwidth,height=0.58\columnwidth,grid=both,xlabel={$k$},ylabel={instantaneous regret},legend pos=north east]
\addplot+[thick] table[x=k,y=active_instant_regret,col sep=comma]{\simulationDataPath/regret_timeseries_downsampled.csv};
\addlegendentry{distributed, active direction}
\addplot+[thick,dashed] table[x=k,y=projection_instant_regret,col sep=comma]{\simulationDataPath/regret_timeseries_downsampled.csv};
\addlegendentry{distributed, projection direction}
\end{axis}
\end{tikzpicture}
\caption{Instantaneous dynamic regret against the centralized oracle.}
\label{fig:sim_instant_regret}
\end{figure}

\begin{figure}[t]\centering
\begin{tikzpicture}
\begin{axis}[width=0.95\columnwidth,height=0.58\columnwidth,grid=both,xlabel={$k$},ylabel={cumulative regret},legend pos=north west]
\addplot+[thick] table[x=k,y=active_cumulative_regret,col sep=comma]{\simulationDataPath/regret_timeseries_downsampled.csv};
\addlegendentry{distributed, active direction}
\addplot+[thick,dashed] table[x=k,y=projection_cumulative_regret,col sep=comma]{\simulationDataPath/regret_timeseries_downsampled.csv};
\addlegendentry{distributed, projection direction}
\end{axis}
\end{tikzpicture}
\caption{Cumulative dynamic regret against the centralized oracle.}
\label{fig:sim_regret}
\end{figure}

\begin{figure}[t]\centering
\begin{tikzpicture}
\begin{axis}[width=0.95\columnwidth,height=0.58\columnwidth,grid=both,xlabel={$k$},ylabel={average regret},legend pos=north east]
\addplot+[thick] table[x=k,y expr=\thisrow{active_cumulative_regret}/(\thisrow{k}+1),col sep=comma]{\simulationDataPath/regret_timeseries_downsampled.csv};
\addlegendentry{distributed, active direction}
\addplot+[thick,dashed] table[x=k,y expr=\thisrow{projection_cumulative_regret}/(\thisrow{k}+1),col sep=comma]{\simulationDataPath/regret_timeseries_downsampled.csv};
\addlegendentry{distributed, projection direction}
\end{axis}
\end{tikzpicture}
\caption{Average dynamic regret. A decreasing trend indicates that cumulative regret grows sublinearly over the simulated horizon.}
\label{fig:sim_average_regret}
\end{figure}

\begin{figure}[t]\centering
\begin{tikzpicture}
\begin{axis}[width=0.95\columnwidth,height=0.58\columnwidth,grid=both,xlabel={$k$},ylabel={$\Psi_k$},legend pos=north east]
\addplot+[thick] table[x=k,y=active_violation,col sep=comma]{\simulationDataPath/violation_timeseries_downsampled.csv};
\addlegendentry{distributed, active direction}
\addplot+[thick,dashed] table[x=k,y=projection_violation,col sep=comma]{\simulationDataPath/violation_timeseries_downsampled.csv};
\addlegendentry{distributed, projection direction}
\end{axis}
\end{tikzpicture}
\caption{Worst-case one-step containment violation.}
\label{fig:sim_violation}
\end{figure}

\begin{figure*}[t]\centering
\begin{tikzpicture}
\begin{axis}[width=1.95\columnwidth,height=1.95\columnwidth,grid=both,xlabel={$x$},ylabel={$y$},axis equal image,legend pos=south east]
\addplot[black!20,no markers,domain=0:360,samples=80,forget plot] ({-3.600+1.150*cos(x)},{0.000+1.150*sin(x)});
\addplot[black!20,no markers,domain=0:360,samples=80,forget plot] ({-2.400+1.150*cos(x)},{0.000+1.150*sin(x)});
\addplot[black!20,no markers,domain=0:360,samples=80,forget plot] ({-1.200+1.150*cos(x)},{0.000+1.150*sin(x)});
\addplot[black!20,no markers,domain=0:360,samples=80,forget plot] ({0.000+1.150*cos(x)},{0.000+1.150*sin(x)});
\addplot[black!20,no markers,domain=0:360,samples=80,forget plot] ({1.200+1.150*cos(x)},{0.000+1.150*sin(x)});
\addplot[black!20,no markers,domain=0:360,samples=80,forget plot] ({2.400+1.150*cos(x)},{0.000+1.150*sin(x)});
\addplot[black!20,no markers,domain=0:360,samples=80,forget plot] ({3.600+1.150*cos(x)},{0.000+1.150*sin(x)});
\addplot+[thick,black,no markers] table[x=target_x,y=target_y,col sep=comma]{\simulationDataPath/trajectory_centralized_downsampled.csv};
\addlegendentry{centralized oracle}
\addplot+[thick,no markers] table[x=target_x,y=target_y,col sep=comma]{\simulationDataPath/trajectory_distributedActive_downsampled.csv};
\addlegendentry{active direction}
\addplot+[thick,dashed,no markers] table[x=target_x,y=target_y,col sep=comma]{\simulationDataPath/trajectory_distributedProjection_downsampled.csv};
\addlegendentry{projection direction}
\addplot+[black,dotted,no markers] table[x=omega_x,y=omega_y,col sep=comma]{\simulationDataPath/omega_downsampled.csv};
\addlegendentry{$\Omega_k$ center}
\end{axis}
\end{tikzpicture}
\caption{Target trajectories and moving admissible-set centers. The MATLAB video overlays the full admissible disks and defender positions.}
\label{fig:sim_trajectories}
\end{figure*}



\section{Conclusion}
\label{sec:conclusion}

This paper develops a cleaner formulation of indirect containment for a
compromised target that retains the same repulsive low-level safety
behavior as the defenders. The defenders choose commands, these commands
induce configurations, and the configurations induce an aggregate field on
the target. The resulting interaction is naturally described as an online
Stackelberg game with distributed information. The central research
question becomes whether the defenders can track the centralized
Stackelberg oracle with sublinear regret and maintain small
worst-case containment violation.







\end{document}